
\documentclass[12pt]{article}

\topmargin=-1,5cm
\textwidth=15cm
\textheight=22cm
\oddsidemargin=2,7cm
\evensidemargin=2,7cm

\hoffset=-2,5cm

\usepackage{amsmath}
\numberwithin{equation}{section}
\usepackage{amsfonts}
\usepackage{amsthm}
\usepackage{graphicx}
\usepackage[utf8]{inputenc}
\inputencoding{latin1}
\usepackage[usenames,dvipsnames]{color}
\usepackage[table]{xcolor}
\usepackage{multirow}
\usepackage{array}
\usepackage{float}
\floatstyle{ruled}
\newfloat{algorithm}{tbp}{loa}
\floatname{algorithm}{Algorithm}
\usepackage{caption}
\usepackage{subcaption}
\usepackage{epstopdf}
\usepackage[english]{babel}

\renewcommand\epsilon\varepsilon

\newtheorem{remark}{Remark}[section]
\newtheorem{theorem}[remark] {Theorem}


\renewcommand\theta\vartheta

\newcommand{\R}{\mathbb R}

\setlength{\fboxrule}{0.2mm}
\setlength{\fboxsep}{2mm}


\begin{document}

 \title{Modeling Networks with\\ a Growing Feature-Structure}
 
\author{Irene Crimaldi, Michela Del Vicario, 
Greg Morrison,\\ Walter Quattrociocchi, 
Massimo Riccaboni
\footnote{Alphabetic order. IMT Institute for Advanced Studies Lucca,
  Piazza San Ponziano 6, I-55100 Lucca, Italy.  E-mail:
  irene.crimaldi@imtlucca.it, michela.delvicario@imtlucca.it
  (corresponding author), greg.morrison@imtlucca.it,
  walter.quattrociocchi@imtlucca.it and massimo.riccaboni@imtlucca.it.
  Irene Crimaldi is a member of the Italian Group ``Gruppo Nazionale
  per l'Analisi Matematica, la Probabilit\`a e le loro Applicazioni
  (GNAMPA)'' of the Italian Institute ``Istituto Nazionale di Alta
  Matematica (INdAM)''.}  }

\maketitle

\begin{abstract}
\noindent We present a new network model accounting for {\em
  multidimensional assor\-ta\-ti\-vi\-ty}. Each node is characterized
by a number of features and the probability of a link between two
nodes depends on common features.  We do not fix a priori the total
number of possible features. The bipartite network of the nodes and
the features {\em evolves} according to a stochastic dynamics that
depends on three parameters that respectively regulate the
preferential attachment in the transmission of the features to the
nodes, the number of new features per node, and the power-law behavior
of the total number of observed features.  Our model also takes into
account a mechanism of {\em triadic closure}. We provide
theo\-re\-ti\-cal results and statistical estimators for the
parameters of the model. We validate our approach by means of
simulations and an empirical analysis of a network of scientific
collaborations.  \\

\noindent{\em keyword:} complex network, bipartite network,
assortativity, homophily, pre\-fe\-ren\-tial attachment, triadic closure.
\end{abstract}

\section{Introduction}
\label{intro}

Many complex systems are often described by means of a network of
interacting components, i.e. a set of nodes connected by links
\cite{barrat2008dynamical, caldarelli2007scale, ek, general-sna-book,
  wasserman}. A large number of scientific fields involve the study of
networks in some form: networks have been used to analyze
interpersonal social relationships, communication systems,
international trade, financial systems, co-authorships and citations,
protein interaction patterns, and much more. Therefore, formal
stochastic models and statistical techniques for the analysis of
network data have emerged as a major topic of interest in diverse
areas of study.  The distribution of the number of node's connections
is well approximated by a power-law in many contexts and preferential
attachment is generally accepted as the simplest mechanism that can
reproduce such a distribution \cite{barabasi2002, BA}. This basic
mechanism, however, is only one of the many forces that can contribute
to shape the evolution of complex networks. For instance, a social
network having power-law degree distribution is an exception rather
than the rule. In particular, preferential attachment is not able to
reproduce the formation of social groups, or communities, and the
composition of social circles. {\em Assortativity} (or assortative
mixing), called homophily in social networks, is defined as the
prevalence of network-links between nodes that are similar to each other
in some respect. Network theorists often analyze assortativity in
terms of a node's degree \cite{b2007, newm2003, pa2001, pi2008}.
Moreover, a large body of research in sociology and, more recently, in
economics, confirms the presence of a multidimensional assortativity
in socio-economic networks: homophily, along the lines of race and
ethnicity, age and sex, edu\-ca\-tion, professional background and
occupation, shapes complex networks such as friendship, marriage,
teamwork, co-membership, exchange and communication networks
\cite{bessi2014social, blau1984, block2014ns, brown2007word,
  currarini2009, ek, feld1982, goo2009, jackson2014, Kandel1978,
  kossinets2009ajs, koss-watts, Lazarsfeld1954, louch2000,
  Marsden1987, McPherson2001, Quattrociocchi2014, Verbrugge1977}. The
assortativity property has been also studied in citation networks: for
instance, in \cite{bramoulle2012jet} authors analyze the citations
among papers (the nodes of the network) published in journals of the
American Physical Society with respect to their PACS classification
codes, that represent the different research sub-fields. In formal
models assortativity is typically represented by partitioning nodes
into different classes (also called groups, clusters, or types)
related to some (observable or unobservable) features \cite{airoldi,
  bramoulle2012jet, gold, handcock, h2010, hoff, hs2003, hunter, kola,
  kri, nov, sn}. The assumption that each node can belong only to a
single class and/or the fact that the number of classes is finite and
fixed a priori as well as the number of the possible features restrict
their applicability.\\

\indent We contribute to this growing body of literature by
introducing a new sto\-cha\-stic model accounting for multidimensional
assortativity. The study of networks of papers, such as co-authorship
or citation networks \cite{barabasi2002evolution, bramoulle2012jet,
  golub, newman2004coauthorship}, is a particularly suitable
application of our model as the generative processes of features and
links are consistent with the basic aspects of the model: first, it is
a growing network process where nodes appear in chronological order
and do not exit; second, the links are established at the entrance of
the nodes and are unchangeable along time; third, each node exhibits
some features (for example, key-words, main topics, etc.) that are
unchangeable during time; finally, the set of the features grows in
time and the evolution of the nodes-features structure is interesting
exactly as the process of the link-creation among the nodes. Indeed,
the description of both phenomena is very important for the
understanding of the diffusion process of ideas and discoveries inside
a certain research field and among different research fields. Anyway,
as we will discuss at the end of this paper, our model can be easily
modified and/or enriched in order to get variants that better fit
networks of a different type.\\

\indent In particular, besides the link-creation mechanism, our model
provides a sto\-cha\-stic dynamics for the evolution of the
features. Differently from the above quoted works (see, for instance,
the model in \cite{kri} and the related discussion about the selection
problem for the dimension of the feature-space), we do not fix a
priori the total number of possible features but we allow the number
of observed features to grow in time. The bipartite nodes-features
network (i.e. the surrounding context) grows according to a
stochastic model that depends on three parameters that respectively
regulate the preferential attachment in the transmission of the
features to the nodes, the number of new features per node, and the
power-law behavior of the total number of observed
features. Concerning this point, the present paper may be considered
as a companion article to \cite{bol-cri-mon}. Indeed, both of them
provide an evolving dynamics for the feature-structure, but they also
show some differences. The main issue is that here we introduce a
parameter that tunes the preferential attachment in the transmission
of the features to the nodes; while in \cite{bol-cri-mon} authors only
consider a preferential attachment rule. Moreover, in that paper a
random ``fitness'' parameter which determines the node's ability to
transmit its own features to other nodes (see also \cite{bcpr-ibm}) is
attached to each node; while here we do not take into account fitness
parameters for nodes. \\

\indent Coming from a structural approach, differently from other
models which concentrate only on assortativity \cite{currarini2013wp,
  MGJ2009, palla-et-al, SCJ}, our model also accounts for the
principle, known as {\em triadic closure} or transitivity, according
to which, if A is a neighbor of B and B is a neighbor of C, then A and
C have a high chance to be neighbors. This principle is widely
supported on the em\-pi\-ri\-cal ground and it is at the basis of many
generative network models \cite{bianconi2014pre, ek, goo2009,
  ispolatov, jacksonrogers, koss-watts, louch2000, marsili,
  newman2003, palla2007quantifying, Rapoport53, sole, toivonen}. It is
worthwhile to note that the expression ``triadic closure''
conceptually refers to a link-formation process not depending on the
features of the nodes that get attached. However, also assortativity
can naturally induce closed triplets in the network and hence
evaluating assortativity and triadic closure separately sometimes may
be not easy. (For a further discussion on this issue, we refer to the
next Section \ref{estimation}.) Anyway models based on both mechanisms
produce more realistic networks.\\

\indent The paper is structured as follows. In Section \ref{ass} we
describe the basic assumptions of our model and the notation used
throughout the paper. In Section \ref{model} we present our stochastic
model, that involves a dynamics for the bipartite network of
nodes-features and the mechanism underlying the formation of the
unipartite (i.e. node-node) network. In Section \ref{meaning} we
illustrate some theoretical results and we carefully explain the
meaning of each parameter inside our model. In Section
\ref{estimation} we show and discuss some statistical tools in order
to estimate the model parameters from the data. In Section \ref{sim}
we provide a number of simulations in order to point out the
functioning of the model parameters and the ability of the proposed
estimation tools. Section \ref{real-data} deals with an application of
our model and instruments to a co-authorship network. Finally, in Section
\ref{conclusions} we give our conclusions and discuss some future
developments. The paper is enriched by an Appendix that contains a
theorem and its proof, and supplementary simulation results.

\section{Preliminaries}
\label{ass}

We assume new nodes sequentially join the network so that node $i$
represents the one that comes into the network at time step $i$. Each
node shows a finite number of features, that can be of different kinds
(key-words, main topics, spatial/geographical contexts, profile,
etc.), and different nodes can share the same features.  It is
worthwhile to note that we do not specify a priori the total number of
possible features but we allow the number of observed features to
increase along time. On its arrival, each new node links to some nodes
already present in the system.  Firstly, links are created according
to probabilities that depend on the number of common features
(multidimensional assortativity).  Then additional links can be
established by means of common neighbors, inducing the closure of some
triangles (triadic closure). We consider the connections as undirected,
non-breakable and we omit self-loops (i.e. edges of type $(i,i)$). In
particular, this means that connections are mutual or the direction is
naturally predefined (for instance, only citations from newer to older
nodes are possible). We denote the adjacency matrix (symmetric by
assumption) by $A$, so that $A_{i,j}=1$ when there exists a link
between nodes $i$ and $j$, $A_{i,j}=0$ otherwise. We set
$$
{\mathcal V}_j(i)=\{j'=1,\dots, i: A_{j,j'}=1\}
$$ 
to be the set of node $j$'s neighbors at time step $i$ 
(after the arrival of $i$).

\indent We denote by $F$ the binary bipartite network where each row
$F_i$ represents the features of node $i$: $F_{i,k}=1$ if node $i$ has
feature $k$, $F_{i,k}=0$ otherwise.  It represents the surrounding
context in which the nodes interact. We assume that each $F_i$ is
unchangeable during time.  We take $F$ left-ordered: this means that
in the first row the columns for which $F_{1,k}=1$ are grouped on the
left and hence, if the first node has $N_1$ features, then the columns of
$F$ with index $k\in\{1,\dots, N_1\}$ represent these features.  The
second node could have some features in common with the first node
(those corresponding to indices $k$ such that $k=1,\dots, N_1$ and
$F_{2,k}=1$) and some, say $N_2$, new features. The latter are grouped
on the right of the set for which $F_{1,k}=1$, i.e., the columns of
$F$ with index $k\in\{N_1+1,\dots, N_2\}$ represent the new features
brought by the second node. This grouping structure persists
throughout the matrix $F$ and we define  $L_n=\sum_{i=1}^n N_i$, i.e.
\begin{equation}\label{def-L}
\begin{split}
L_n=\hbox{overall number of different observed features 
for the first } n 
\hbox{ nodes}.
\end{split}
\end{equation}
Here is an example of a $F$ matrix with $n=3$ nodes:
\[
F=\left(
\begin{array}{cccccccccccc}
\cellcolor{gray!20}1 & \cellcolor{gray!20}1 & \cellcolor{gray!20}1 
& 0 & 0 & 0
& 0 & 0\\
1 & 0 & 1 & \cellcolor{gray!20}1 & \cellcolor{gray!20}1 & 0 & 0 & 0\\
0 & 1 & 1 & 1 & 0 & \cellcolor{gray!20}1 & \cellcolor{gray!20}1 &
\cellcolor{gray!20}1\\
\end{array}
\right).
\]
\noindent In gray we show the new features brought by each node (in
the example $N_1=3$, $N_2=2$, $N_3=3$ and so
$L_1=3,L_2=5,L_3=8$). Observe that, for every node $i$, the $i$-th row
contains 1 for all the columns with indices $k \in
\{L_{i-1}+1,\dots,L_i\}$ (they represent the new features brought by
$i$). Moreover, some elements of the columns with indices $k \in
\{1,\dots,L_{i-1}\}$ are also $1$ (features brought by previous nodes
adopted by node $i$).

\section{The model}
\label{model}

Fix $\alpha>0$, $\beta\in [0,1]$, $\delta\in [0,1]$, $p\in[0,1]$ and
let $\Phi: {\mathbb R} \to [0,1]$ be an increasing function.  The
dynamics is the following.  Node 1 arrives and shows $N_1$ features,
where $N_1$ is Poi$(\alpha)$-distributed (the symbol Poi$(\alpha)$
denotes the Poisson distribution with mean $\alpha$).  Then, for each
$i\geq 2$,
\begin{itemize}
\item {\bf Feature-structure dynamics:} Node
  $i$ arrives and shows a number of features as follows:
\begin{itemize}
\item Node $i$ exhibits some of the {\em ``old''} 
features  brought by the
  previous nodes $1,\dots, i-1$: 
more precisely, each feature $k\in
  \{1,\dots,L_{i-1}\}$ is, independently of the
  others, possessed by node $i$ 
 with probability 
(that we call ``inclusion-probability'')
\begin{equation}
P_{i}(k)=\delta \frac{1}{2}+(1-\delta)
\frac{\sum_{j=1}^{i-1} F_{j,k}}{i}\,,
\label{inclusion-prob}
\end{equation} 
where $F_{j,k}=1$ if node $j$ shows feature $k$ and
$F_{j,k}=0$ otherwise. 
\item Node $i$ also shows $N_{i}$ {\em ``new''} features, 
where $N_{i}$ is
  Poi$(\lambda_i)$-distributed with
\begin{equation}
\lambda_i=\frac{\alpha}{i^{1-\beta}}.
\label{lambda}
\end{equation}
\end{itemize}
($N_i$ is independent of $N_1,\dots, N_{i-1}$ and 
of the exhibited ``old'' features.)\\
\noindent The matrix element $F_{i,k}$ is set equal
to $1$ if node $i$ has feature $k$ and equal to zero
otherwise.
\item {\bf Network construction:} On its arrival,
  node $i$ determines a set ${\mathcal L}_i$ of neighbors among
  the nodes already present in the network (so that we set
  $A_{i,j}=A_{j,i}=1$ for each $j\in{\mathcal L}_i$) as follows:
\begin{itemize}
\item ({\em First phase}) First, a set ${\mathcal L}_i^*$ of neighbors
  of node $i$ is established on the basis of the features shown.  Each
  node $j$ already present in the network (i.e.  $1\leq j\leq i-1$) is
  included in ${\mathcal L}_i^*$, independently of the others, with
  probability $\Phi(S_{i,j})$, where
\begin{equation}\label{similarity}
S_{i,j}=
\sum_{k=1}^{L_i} F_{i,k}F_{j,k}.
\end{equation}
is the number of features that $i$ and $j$ have in common.
\item ({\em Second phase}) Then some extra neighbors 
  are added to ${\mathcal L}_i$ on the basis of
  common neighbors. For every node $j\in\{1,\dots, i-1\}\setminus
  {\mathcal L}_i^*$, each node $j'\in {\mathcal V}_j(i-1)\cap
  {\mathcal L}_i^*$ (i.e. each neighbor that $i$ and $j$ currently
  share) can induce, independently of the others, the additional link
  $(i,j)$ with probability $p$.
\end{itemize} 
\end{itemize}

\section{Meaning of the model parameters and some results}
\label{meaning}

We now illustrate the meaning of the model parameters and 
some mathematical results regarding our model.

\subsection{The parameters $\alpha$ and $\beta$} 

Let us start with $\alpha$ and $\beta$.  The main effect of $\beta$ is
to regulate the asymptotic behavior of the random variable $L_n$
defined in \eqref{def-L} as a function of $n$. In particular,
$\beta>0$ is the {\em power-law exponent} of $L_n$. The main effect of
$\alpha$ is the following: the larger $\alpha$, the larger the total
number of new features brought by a node. It is worth to note that
$\beta$ fits the asymptotic behavior of $L_n$ and then, separately,
$\alpha$ fits the number of new observed features per node. (In
Section \ref{simulations-F} we will discuss more deeply this fact.)
More precisely, we prove (see the Appendix) the following asymptotic
behaviors:
\begin{itemize}
\item[a)] for $\beta=0$, we have a logarithmic behavior 
of $L_n$, that is 
${L_n}/{\ln(n)}\stackrel{a.s.}\longrightarrow \alpha$; 
\item[b)] for $\beta\in (0,1]$, we obtain a power-law 
behavior, i.e. 
${L_n}/{n^{\beta}}\stackrel{a.s.}
\longrightarrow {\alpha}/{\beta}$.
\end{itemize}

\subsection{The parameter $\delta$}

The parameter $\delta$ tunes the phenomenon of {\em preferential
  attachment} in the spreading process of features among nodes.  The
value $\delta=0$ corresponds to the ``pure preferential attachment
case'': the larger the weight of a feature $k$ at time step $i-1$
(given by the numerator of the second element in
(\ref{inclusion-prob}), i.e., the total number of nodes that exhibit
it until time step $i-1$), the greater the probability that $k$ will
be shown by the future node $i$.  The value $\delta=1$ corresponds to
the ``pure i.i.d. case'' with inclusion probability equal to $1/2$: a
node includes each feature with probability $1/2$ independently of the
other nodes and the other features.  When $\delta\in (0,1)$, we have a
mixture of the two cases above: the smaller $\delta$, the more
significant is the role played by preferential attachment in the
transmission of the features to new nodes.

\subsection{The function $\Phi$ and the parameter $p$}

According to our model, when a new node enters the system, it links to
some (possibly zero, one, or more) old nodes by means of the two
phases network construction described in Section \ref{model}. In the
first phase, a new node $i$ connects itself to some of the old nodes
according to a probability depending on its own features and the ones
of the others.  The function $\Phi$ relates the ``first-phase
link-probability'' of $i$ to $j$ (with $1\leq j\leq i-1$) to their
``similarity'' $S_{i,j}$ defined by (\ref{similarity}). Since $\Phi$
is assumed to be an increasing function, a higher number of common
features between nodes $i$ and $j$ induces a larger probability for
them to connect (akin the principle of assortativity). For instance,
we can take the generalization of the logistic function, i.e. the
sigmoid function
\begin{equation}\label{sigmoid}
\Phi(s) = \frac{1}{1+e^{K(\theta-s)}}\qquad \hbox{with } 
K>0,\,\theta\in\R.
\end{equation}
\noindent The sigmoid function smoothly increases (from $0$ to $1$)
around a threshold $\theta$, while $K$ controls its smoothness: the
bigger $K$, the steeper the sigmoid. In particular, $K=1$ and
$\theta=0$ give the logistic function and, for $K\to +\infty$, $\Phi$
approaches to a step function equal to $1$ or $0$, if the variable $s$
is respectively greater or smaller than $\theta$ (in our model,
$\theta \geq 0$ means that the links are established deterministically
based on whether the two involved nodes have, or not, a similarity
bigger than $\theta$).  In the second phase, node $i$ can connect to
some of the nodes discarded in the first phase by means of common
neighbors (triadic closure).  The parameter $p$ regulates this
phenomenon. Indeed, it represents the probability that a node causes a
link between two of its neighbors.  More precisely, in the second
phase, the probability of having a link between node $i$ and a node
$j\in\{1,\dots, i-1\}\setminus {\mathcal L}_i^*$ is
$\left[1-(1-p)^{C_{i,j}}\right]$, where $C_{i,j}= \mbox{card}
\big({\mathcal V}_j(i-1)\cap{\mathcal L}_i^*\big)$ is the number of
common neighbors of $i$ and $j$ after the first phase. Consequently,
the ``second-phase link-probability'' between a pair of nodes
increases with respect to $p$ and the number of neighbors they share.
The case $p=0$ corresponds to the case in which the connections only
depend on the similarity among nodes. The case $p=1$ corresponds to
the case in which the connection is automatically established when
$C_{i,j}>0$.

\section{Estimation of the model parameters}
\label{estimation}

In this section we illustrate how to
estimate the model parameters from the data.\\

\indent Suppose we can observe the values of
$F_1,\dots, F_n$, i.e. $n$ rows of the matrix $F$, 
where $n$ is the
number of observed nodes. From the asymptotic 
behavior of $L_n$, we get 
that $\ln(L_n)/\ln(n)$ is a strongly consistent estimator
  for $\beta$, hence 
we can use the slope $\widehat\beta$ of the
  regression line in the log-log plot 
(of $L_n$ as a function of $n$) as
  an estimate for $\beta$.
 
After computing $\widehat\beta$, we can estimate
$\alpha$ as:
\begin{equation}\label{stima-alpha}
\begin{split}
\widehat\alpha&= 
\widehat\gamma\qquad\hbox{when } \widehat\beta=0\\
\widehat\alpha&={\widehat\beta}\,\widehat\gamma 
\qquad\hbox{when } 0<\widehat\beta\leq 1,
\end{split}
\end{equation}
where $\widehat\gamma$ is the slope of the regression 
line in the plot $\big(\ln(n), L_n\big)$ or 
in the plot $\big(n^{\widehat\beta}, L_n\big)$
according to whether $\widehat\beta=0$ or $\widehat\beta\in (0,1]$. 

\indent We can estimate $\delta$ by means of a 
maximum likelihood
procedure. For this purpose, we now give a general expression of 
the probability
of observing $F_1=f_1,\dots, F_n=f_n$ given the 
parameters $\alpha, \beta$, and $\delta$.

The first row $F_1$ is simply identified by 
$L_1=N_1$ and so
\begin{equation*}
\begin{split}
P(F_1=f_1)&=P(N_1=n_1=\mbox{card}\{k: f_{1,k}=1\})\\
&=\mbox{Poi}(\alpha)\{n_1\}=e^{-\alpha}\frac{\alpha^{n_1}}{n_1!}.
\end{split}
\end{equation*}
\noindent Then the second row is identified by the values 
$F_{2,k}$, with $k=1,\dots, L_1=N_1$, and by $N_2$, so that  
\begin{equation*}
\begin{split}
&P(F_2=f_2|F_1)=\\
&P(F_{2,k}=f_{2,k}\,\hbox{for } k=1,\dots,L_1,\, 
N_2=n_2=\mbox{card}\{k>L_1: f_{2,k}=1\} |F_1)
=\\
&\prod_{k=1}^{L_1} P_2(k)^{f_{2,k}}(1-P_2(k))^{1-f_{2,k}}
\times
\mbox{Poi}(\lambda_2)\{n_2\},
\end{split}
\end{equation*}
where $P_2(k)$ is defined in (\ref{inclusion-prob}) and 
$\lambda_2$ is defined in (\ref{lambda}). 
The general formula is 
\begin{equation*}
\begin{split}
&P(F_{i}=f_{i}|F_1,\dots,F_{i-1})=\\  
&P\left(F_{i,k}=f_{i,k}\,\hbox{for } 
k=1,\dots,L_{i-1},\,\right.
\\
&\quad\quad \left.N_{i}=n_{i}=\mbox{card}\{k>L_{i-1}: f_{i,k}=1\}
|F_1, \dots,F_{i-1}\right)
=\\ 
&\prod_{k=1}^{L_{i-1}} P_{i}(k)^{f_{i,k}}(1-P_i(k))^{1-f_{i,k}} 
\times
  \mbox{Poi}(\lambda_i)\{n_{i}\},
\end{split}
\end{equation*}
where $P_i(k)$ is defined in (\ref{inclusion-prob}) and 
$\lambda_i$ is defined in (\ref{lambda}). Thus, 
for $n$ nodes, we can write a formula for the 
probability of observing $F_1=f_1,\dots, F_n=f_n$:
\begin{equation} \label{likelihood}
\begin{split}
&P(F_1=f_1,\dots,F_n=f_n)=\\
&P(F_1=f_1)
\prod_{i=2}^{n} P(F_{i}=f_{i}|F_1,\dots,F_{i-1}).
\end{split}
\end{equation}
\noindent Therefore, we look for $\widehat\delta$ that 
maximizes the likelihood
function, i.e. the quantity $P(F_1=f_1,\dots,F_n=f_n)$ 
as a function
of $\delta$ (given the observed vectors $f_{i}$). 
Since some factors do
not depend on $\delta$, we can simplify 
the function to be maximized
as
\begin{equation}
\prod_{i=2}^{n}
\prod_{k=1}^{L_{i-1}} P_i(k)^{f_{i,k}}(1-P_i(k))^{1-f_{i,k}},
\end{equation}
or, equivalently, passing to the logarithm, as
\begin{equation}
\label{delta-loglike}
\sum_{i=2}^{n}
\sum_{k=1}^{L_{i-1}} f_{i,k}\ln\big(P_i(k)\big)+
(1-f_{i,k})\ln\big(1-P_i(k)\big).
\end{equation}
\\

\indent Now, suppose that we are also allowed to observe the adjacency
matrix $A=(A_{i,j})_{1\leq i,j\leq n}$ (meaning the final adjacency
matrix after the arrival of all the $n$ observed nodes and the
formation of all their links) and to know which are the links that
each of the $n$ observed nodes formed only by means of the previously
described first phase (i.e. only due to assortativity).  Denote by
$A'=(A'_{i,j})_{1\leq i,j\leq n}$ the adjacency matrix collecting
them. Then, if we decide to model the function $\Phi$ as in
(\ref{sigmoid}), we can choose $K$, $\theta$, and $p$, in order to fit
some properties of the observed matrices $A'$ and $A$. For instance,
if $\ell$ is the number of observed (undirected) links in matrix $A'$
(i.e. only due to the first phase of network construction) and
$$ f^*= \frac{ \hbox{observed number of linked (in $A'$) pairs of
    nodes with } s^*\, \hbox{ features in common}} {\hbox{observed
    number of pairs of nodes with } s^*\, \hbox{ features in common}},
$$ where $s^*$ is a fixed value that we choose, then we can determine $K>0$
and $\theta\in\R$ by solving (numerically) the following system of two
equations:
\begin{equation}\label{system-K-theta}
\begin{split}
&\Phi(s^*)=\left(1+e^{K(\theta - s^*)}\right)^{-1}=f^*
\\[3pt]
&E\left[\sum_{i,j: 2\leq i\leq n, 1\leq j\leq i-1} A'_{i,j}\right]
=\sum_{i=2}^{n}\sum_{j=1}^{i-1}\Phi\left(S_{i,j}\right)= 
\\
&
\sum_{i=2}^{n}\sum_{j=1}^{i-1}
\left(1+
e^{K(\theta - s^*)+K(s^*-\sum_{k=1}^{L_i} F_{i,k} F_{j,k})}
\right)^{-1}=\ell.
\end{split}
\end{equation}
By means of the first equation, we fit the probability that a pair of
nodes with $s^*$ features in common establishes a link (during the
first phase of network construction); while, by the second equation,
we set the expected number of links in $A'$ equal to the observed
$\ell$.  From the first equation, we get the quantity $K(\theta-s^*)$,
we then replace it in the second one in order to obtain $K$ and from
this we get $\theta$. Note that this is not a proper estimation
procedure, but rather a selection mechanism for $K$ and $\theta$ in
order to fit some observed properties of the network. After that, we
can estimate $p$ by means of a maximum likelihood procedure based on
the observed matrices.\\

\indent Some important remarks follow. If in the considered situation
the formation of links only occur according to the first phase
(i.e. as a result of the assortativity property), then we can set
$p=0$ as in this case the presence of closed triplets is only caused by
common features and the matrix $A$ coincides with $A'$. Then we have
no problem to implement the previous procedures for detecting all the
model parameters. When we have both phases of network construction
(i.e. $p>0$), the detection of $K,\theta$, and $p$ may generate some
problems since the available data are typically $F$ and $A$, while, in
order to implement the above procedure, we also need to observe $A'$.
When we cannot observe $A'$, we may try to reconstruct it from $A$ in
some consistent way, if it is possible for the considered application
\cite{lafond-neville}.  However, every empirical criterion used to
distinguish between the two different types of links (the ones due to
the first phase and the ones induced by the second phase), obviously
has some degree of arbitrariness and it can be hard to understand the
bias implied by it. An example of this problem can be found in
\cite{bramoulle2012jet} regarding a citation network. In the case no
suitable criterion is found, we may try to select $K,\theta$, and $p$
in such a way that some properties of the adjacency matrix generated
by the model are close to the observed one. Statistical procedures
that integrate out unobserved variables (in this case, $A'$) or
expectation-maximization (EM) algorithms are also possible and they
will be subject of future developments. Therefore, although
assortativity and triadic closure are theoretically well separated
concepts, in practice there are situations in which estimating them
singly is not a simple task. However, their combination is often
necessary in order to get models that produce realistic networks. The
simulation of the model with the observed matrix $F$ and $p=0$ can be
useful as a benchmark.

\section{Simulations}
\label{sim}

In this section, we present a number of simulations performed
following the dynamics for the features' selection and links' creation
described in Section \ref{model}. We simulated the outcome for feature
matrices and for unipartite networks of $1000$ nodes, on a sample of
$100$ realizations. Regarding the feature-selection dynamics, we
analyzed the resulting feature matrices (constructed as explained in
Section \ref{ass}) for different values of the model pa\-ra\-me\-ters
$\alpha$, $\beta$, and $\delta$, responsible respectively of the
number of new features per node, the asymptotic behavior of $L_n$
defined in \eqref{def-L}, and the phenomenon of preferential
attachment in the transmission of the features to new nodes. After
that, we simulated the network construction taking $\Phi$ as in
\eqref{sigmoid} and analyzed its properties for different values of
$\delta$, $K$, and $p$, while $\theta$ is determined according to a
certain number $\ell$ of (undirected) links due to the first phase of
the unipartite network construction.

\subsection{Simulations of the feature matrix and 
estimation of $\alpha,\,\beta$, and $\delta$}
\label{simulations-F}

As said before, parameter $\alpha$ is responsible for the number of
new features per node: the larger $\alpha$, the higher the number of
new features per node. Concerning this, it is very important to stress
that also the parameter $\beta$ affects the number of features per
node, but the idea is that we select first $\beta$, in order to fit
the asymptotic behavior of $L_n$, and then $\alpha$ in order to fit
the number of new features per node.\\

\indent In the first set of simulations we kept $\beta=0.5$ and
$\delta=0.1$ fixed and we built the feature matrix for different
values of $\alpha=3,\,8,\,13$. In Figure \ref{matrices-alpha} we can
see the shapes of the feature matrices (where colored points denote
non-zero values, i.e. $1$) for the three different values of
$\alpha$. It is immediate to see that the main difference among these
matrices concerns the number of features: the total number of features
is $185$ for $\alpha=3$, $533$ for $\alpha=8$, and $819$ for
$\alpha=13$.  Correspondingly, the mean number of new features per
node (averaged over $100$ realizations) is about $0.19$ for
$\alpha=3$, $0.49$ for $\alpha=8$, and $0.8$ for $\alpha=13$. The mean
number of (total) adopted features per node (averaged over $100$
realizations) is about $19.99$ for $\alpha=3$, $52.66$ for $\alpha=8$,
and $79.65$ for $\alpha=13$.  \\

\begin{figure}
\centering
\begin{subfigure}[b]{0.3\textwidth}
\includegraphics[width=\textwidth]{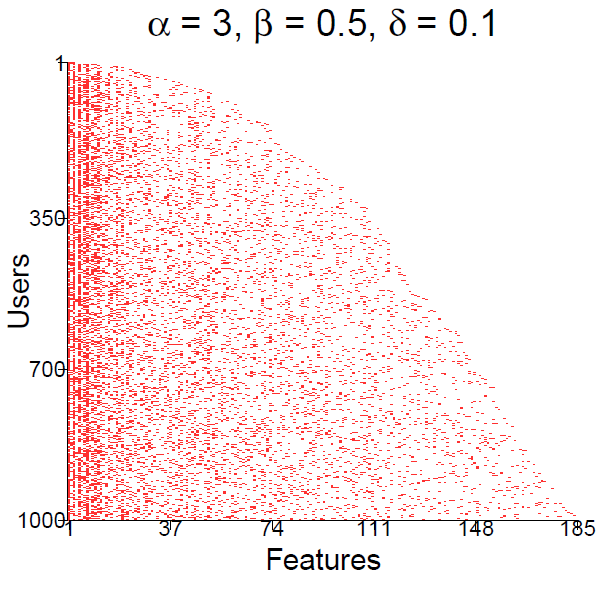}
\end{subfigure}
\begin{subfigure}[b]{0.3\textwidth}
\includegraphics[width=\textwidth]{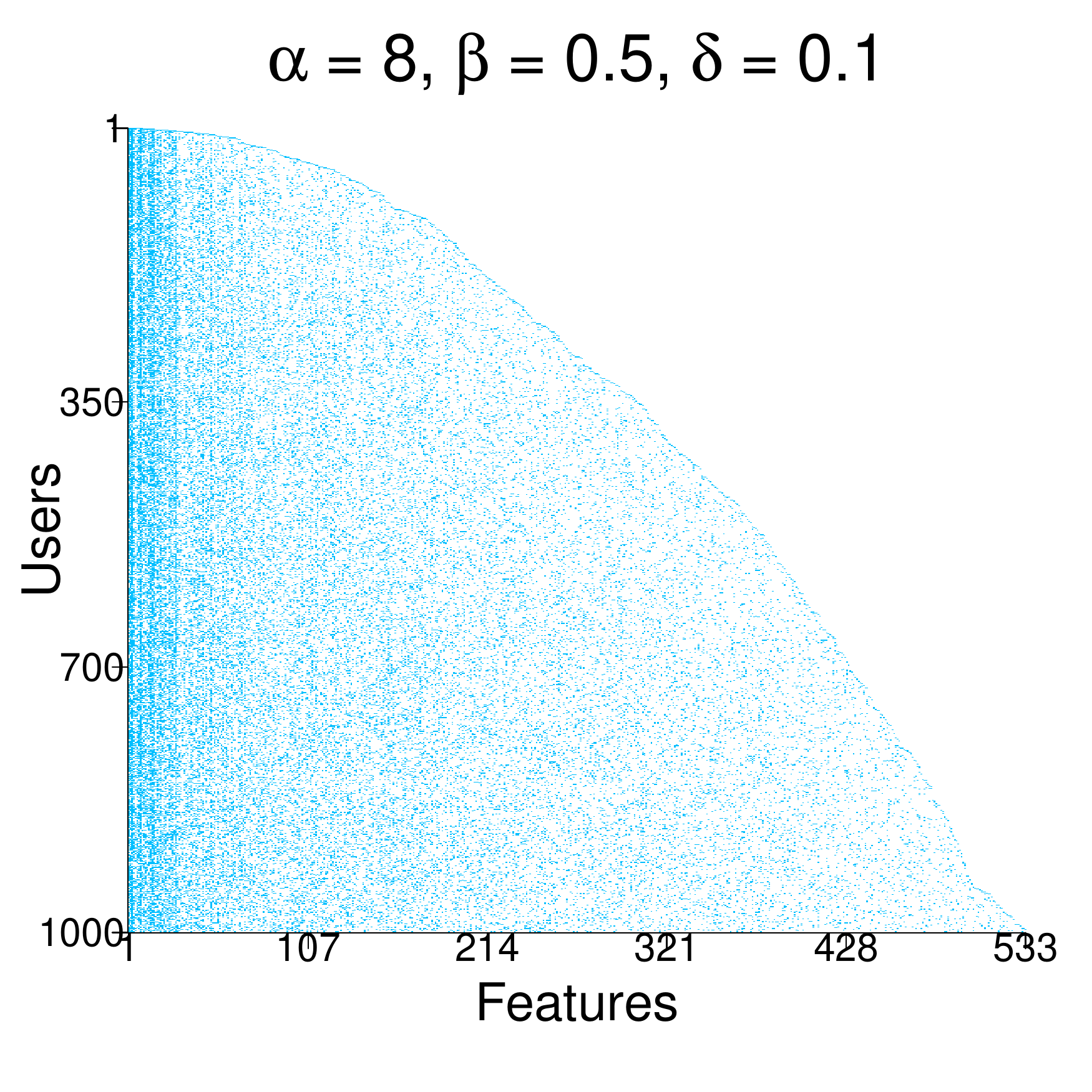}
\end{subfigure}
\begin{subfigure}[b]{0.3\textwidth}    
\includegraphics[width=\textwidth]{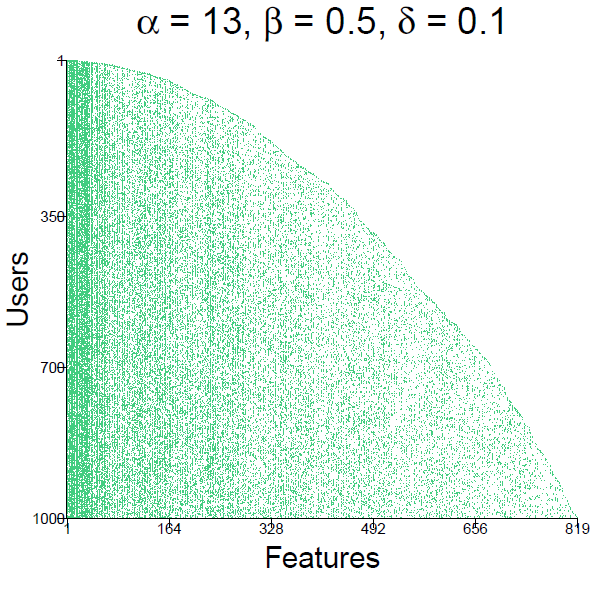}
\end{subfigure}
\caption{{\bf An example of features matrices} for $n=1000$,
  $\beta=0.5$, $\delta=0.1$, and different values of $\alpha: 3$ (left),
  $8$ (middle), $13$ (right).  Colored points denote $1$ and white
  points denote $0$.}
\label{matrices-alpha}
\end{figure}

In Figure \ref{featlen1} we show the estimates for the different
values of $\alpha$ (with $\beta=0.5$ and $\delta=0.1$ kept fixed).\\

\begin{figure}
\centering
\begin{subfigure}[b]{0.3\textwidth}
\includegraphics[width=\textwidth]{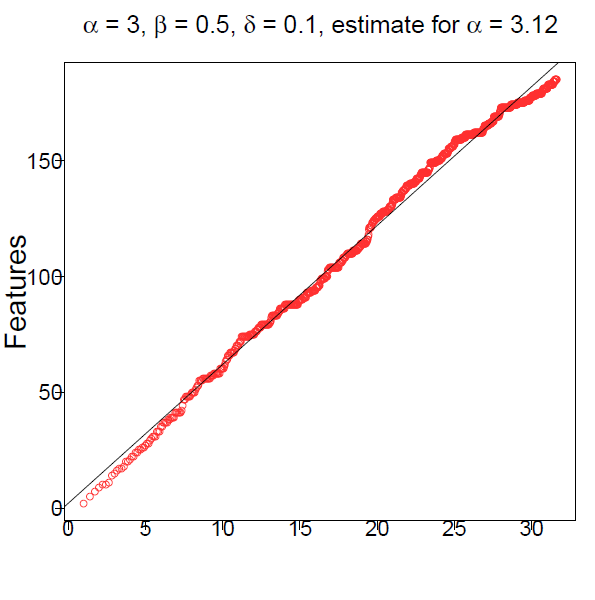}
\end{subfigure}
\begin{subfigure}[b]{0.3\textwidth}
\includegraphics[width=\textwidth]{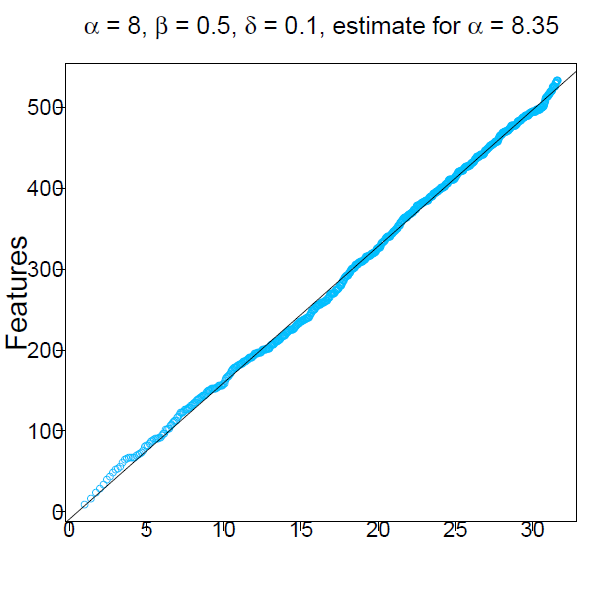}
\end{subfigure}
\begin{subfigure}[b]{0.3\textwidth}    
\includegraphics[width=\textwidth]{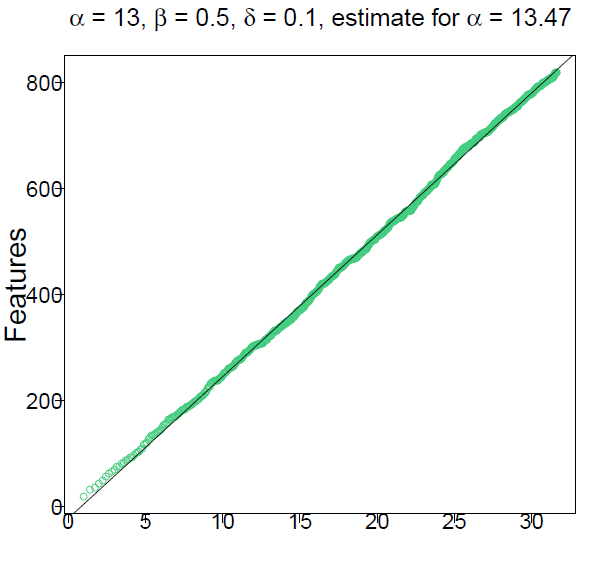}
\end{subfigure}
\caption{{\bf Estimates of $\alpha$} (when $\beta=0.5$ and $\delta=0.1$) 
  obtained as the slope of the regression line in the plot of $L_n$ as
  a function of $n^\beta$.  Different values of $\alpha: 3$ (left), $8$
  (middle), $13$ (right) are reported.  }
\label{featlen1}
\end{figure}

\indent Parameter $\beta$ controls the asymptotic behavior of
$L_n$. For this reason we plotted $L_n$ as a function of $n$ in a
log-log scale, results are reported in Figure \ref{featlen2}.  In
Figure \ref{featlen2} (a)-(b), we show the estimates for two different
values of $\beta$ ($\beta=0.75$ and $\beta=1$), with $\alpha=3$ and
$\delta=0.1$. In Figure \ref{featlen2} (c)-(d), we show the estimate
of $\beta$, for $\beta=0.5$ and $\beta=0.75$, but for a different
value of $\alpha$ ($\alpha=10$) in order to underline that $\alpha$
does not affect the power-law behavior of $L_n$ (obviously, the value
of the estimate can be more or less accurate for different values of
$\alpha$).  \\

\begin{figure}
\centering
\begin{subfigure}[b]{0.45\textwidth}
\includegraphics[width=\textwidth]{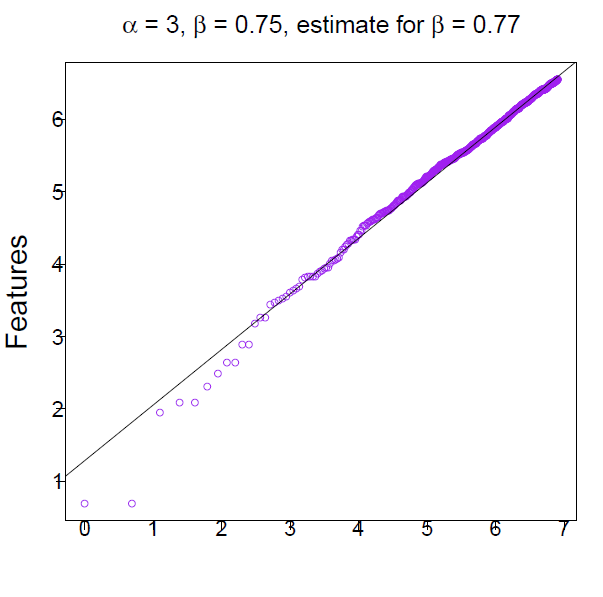}
\caption{}
\end{subfigure}
\begin{subfigure}[b]{0.45\textwidth}
\includegraphics[width=\textwidth]{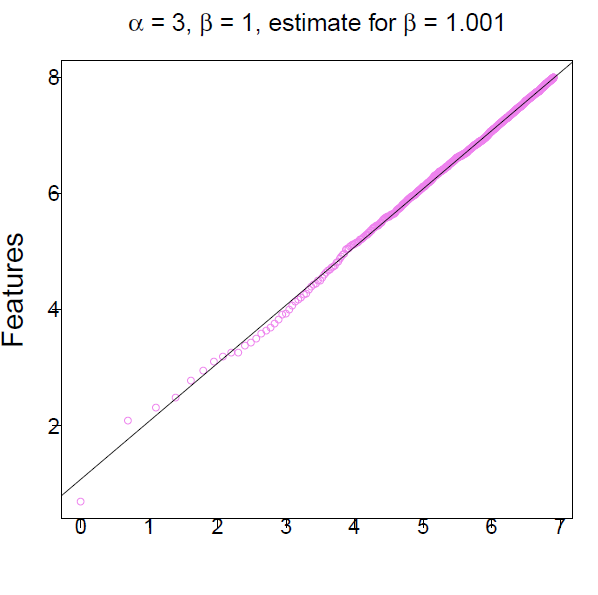}
\caption{}
\end{subfigure}

\begin{subfigure}[b]{0.45\textwidth}    
\includegraphics[width=\textwidth]{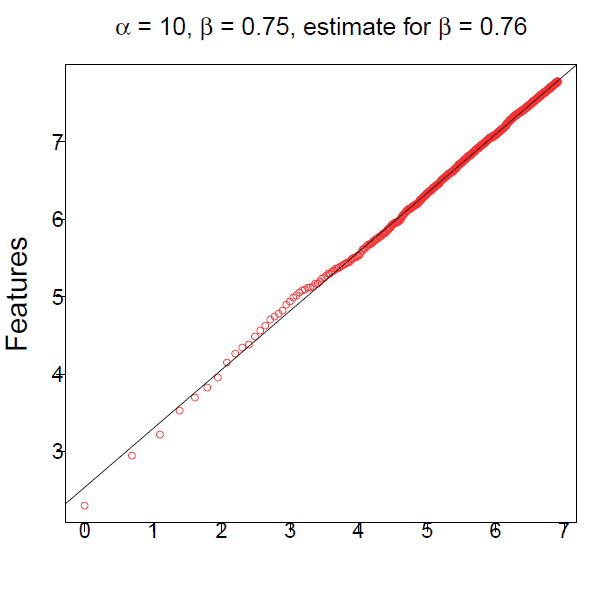}
\caption{}
\end{subfigure}
\begin{subfigure}[b]{0.45\textwidth}    
\includegraphics[width=\textwidth]{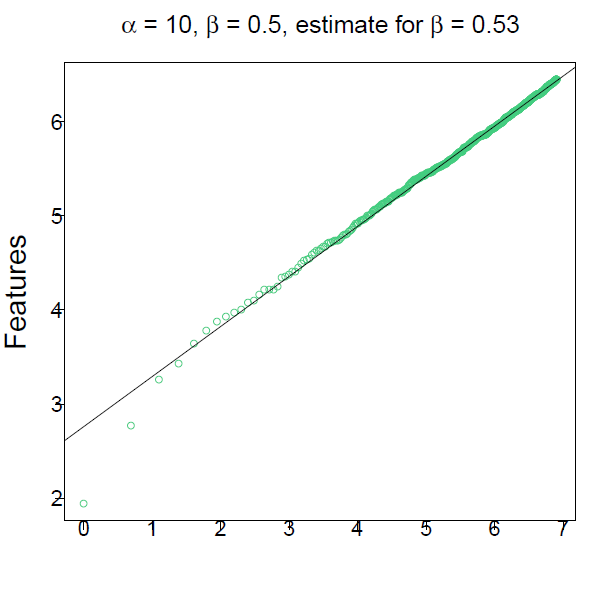}
\caption{}
\end{subfigure}
\caption{{\bf Estimates of $\beta$} obtained as the slope of the
  regression line in the log-log plot of $L_n$ as a function of $n$. 
  Different values of $\alpha$ and $\beta$ are reported:
  $\alpha=3,\,\beta=0.75$ (a), $\alpha=3,\,\beta=1$ (b),
  $\alpha=10,\,\beta=0.75$ (c),  and $\alpha=10,\,\beta=0.5$ (d).}
\label{featlen2}
\end{figure}

\indent Finally, parameter $\delta$ regulates the phenomenon of
preferential attachment: $\delta=0$ corresponds to the pure
preferential attachment case; while $\delta=1$ to the pure i.i.d case
with inclusion probability equal to $1/2$. The parameter $\delta$ is
estimated through the maximization of the likelihood function in
Equation \eqref{delta-loglike}.  Results for the estimated parameters
are reported in Table \ref{delta_tab}.\\
 
 \begin{table}[ht]
 \centering
{\scriptsize{
 \begin{tabular}{l |c c c c c c c c c c c }
 \hline
$\delta$ & 0 & 0.1 & 0.2 & 0.3 & 0.4 & 0.5 & 0.6 & 0.7 & 0.8 & 0.9 & 1\\ 
 $\hat{\delta}$ & 0.0002& 0.1002 & 0.2002 & 0.296 & 0.401 & 0.495 & 0.603& 
0.703 & 0.8 & 0.9 & 1.007\\
 \hline
 \end{tabular}
  }}
 \caption{\textbf{Estimates of $\delta$} computed as the maximum point
   $\widehat\delta$ of the likelihood function in formula
   \eqref{delta-loglike} with $\alpha=10$ and $\beta=0.5$.}
\label{delta_tab} 
 \end{table}
 
\indent In order to assess the accuracy of our estimation procedures,
we checked the Mean Squared Error (MSE) for all the three
parameters. More precisely, taking a sample of $R=100$ realizations,
we computed the quantities
\begin{equation*}
MSE_\alpha=\frac{1}{R}\sum_{r=1}^R({\widehat\alpha}_r-\alpha)^2,
\quad
MSE_\beta=\frac{1}{R}\sum_{r=1}^R({\widehat\beta}_r-\beta)^2,
\quad
MSE_\delta=\frac{1}{R}\sum_{r=1}^R({\widehat\delta}_r-\delta)^2,
\end{equation*}
where $\alpha,\, \beta,\, \delta$ are the values used to generate all
the $100$ realizations and ${\widehat\alpha}_r,\,{\widehat\beta}_r,\,
{\widehat\delta}_r$ are the estimated values associated with the
realization $r$.  For $\alpha=10,\,\beta =0.5,\, \delta=0.1$, we
obtained the following values:
$$
MSE_\alpha=1.18, \quad MSE_\beta = 0.0004, \quad MSE_\delta= 9\times 10^{-7}.
$$

\indent In Figure \ref{matrices-delta}, we show the shapes of the
feature matrices (where colored points denote non-zero values,
i.e. $1$) for different values of $\delta=0.1,\,0.5,\,0.95$ (two
different values of $\alpha=3,\,8$ and a fixed value of
$\beta=0.5$). Although the number of new features for each node is
comparable for different values of $\delta$ and a fixed value of
$\alpha$ (indeed, the parameter $\delta$ does not affect the number of
new features per node, but only the transmission of the old features
to the subsequent nodes), the number of old features selected by the
nodes depends on $\delta$: the more $\delta$ is near to zero, the more
the probability of showing an old feature depends on how many other
nodes selected it (preferential attachment). This fact is pointed out
by the ``full'' vertical lines, that are concentrated on the left-hand
side (since the preferential attachment phenomenon, the first features
are more successfully transmitted). For greater values of $\delta$,
the matrices become denser and they present a more uniform
distribution of the features among the nodes. The mean number of
(total) adopted features per node for $\alpha=3$ and $\delta$ equal to
$0.1,\,0.5$, and $0.95$ (averaged over $100$ realizations) is about
$19.99,\,44.24$, and $71.49$ respectively; while for $\alpha=8$ and
same values of $\delta$ it is approximately equal to $52.66,\,128.17$,
and $167.63$ respectively.\\

\indent In order to measure the ``uniformity'' of the distribution
of the features among nodes, we simply divided the total set of the
features into two subsets: $\{1,\dots, \lfloor L_n/2\rfloor\}$ and
$\{\lfloor L_n/2\rfloor +1, \dots, L_n\}$. For each feature, we
computed the mean number of nodes that adopted it (i.e. the total
number of nodes that adopted the considered feature divided by the
total number of nodes that could have adopted it). Then we computed
the mean value of these numbers over the two subsets and took the
difference between these two values.  For different values of $\alpha$
and $\delta$, Table \ref{tab_mean} contains the corresponding values
 (averaged over $100$ realizations) of these differences.
  It is clear that the smaller
the reported value, the more uniform is the distribution of the
features in the matrix. We can notice that for $\delta=0.1$ and
$\delta=0.5$ the obtained values are comparable (about $0.10$ and
$0.11$); while for $\delta=0.95$ we got a very small value.  \\

\begin{figure}
\centering
\begin{subfigure}[b]{0.3\textwidth}
\includegraphics[width=\textwidth]{Feat_mat_alfa3.png}
\end{subfigure}
\begin{subfigure}[b]{0.3\textwidth}
\includegraphics[width=\textwidth]{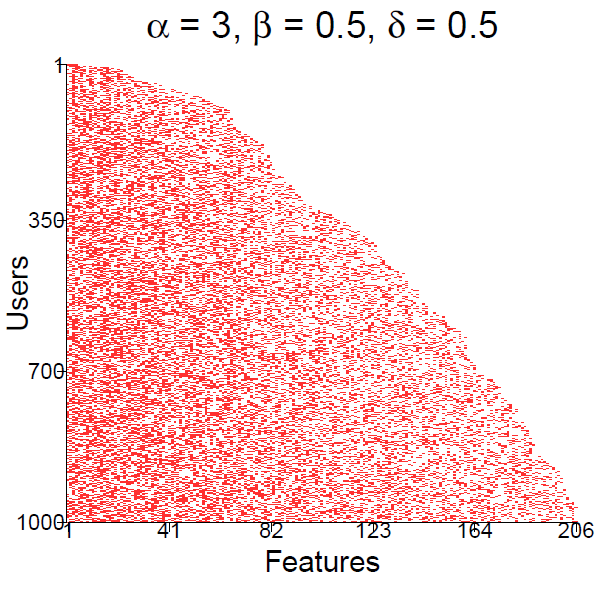}
\end{subfigure}
\begin{subfigure}[b]{0.3\textwidth}
\includegraphics[width=\textwidth]{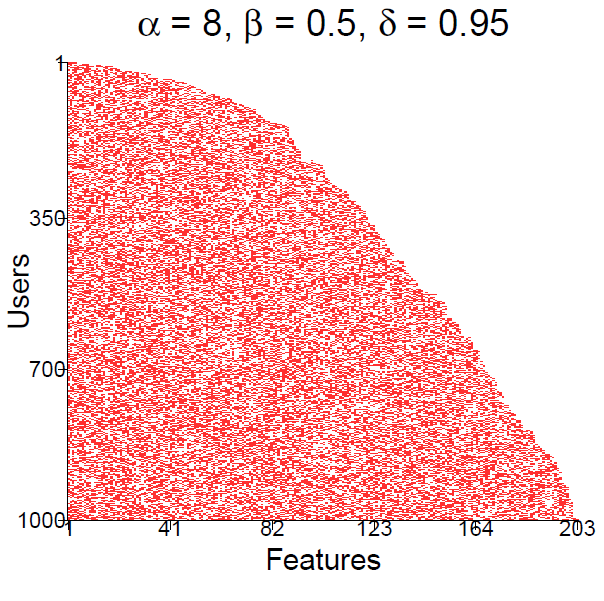}
\end{subfigure}

\begin{subfigure}[b]{0.3\textwidth}
\includegraphics[width=\textwidth]{Feat_mat_alfa8.pdf}
\end{subfigure}
\begin{subfigure}[b]{0.3\textwidth}
\includegraphics[width=\textwidth]{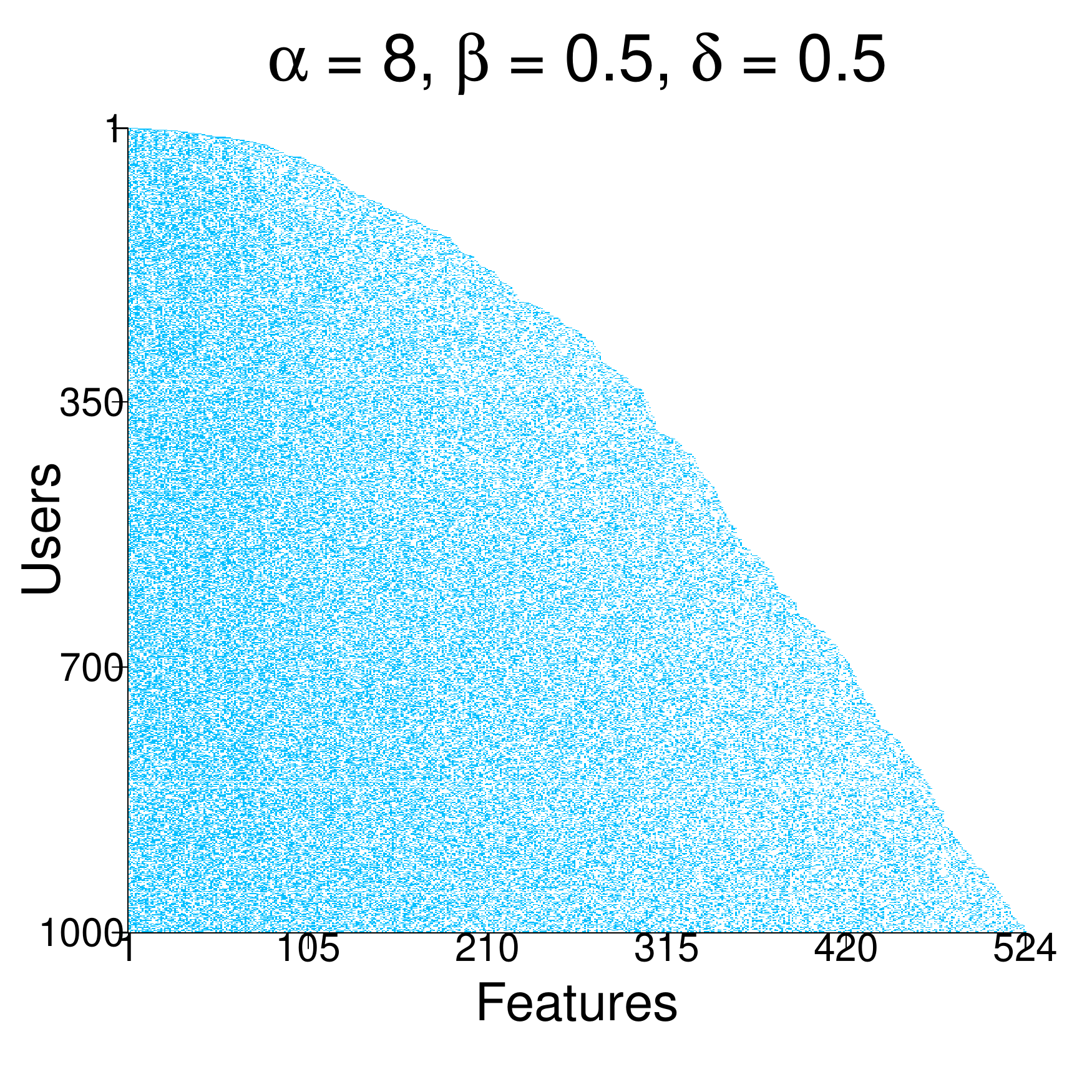}
\end{subfigure}
\begin{subfigure}[b]{0.3\textwidth}
\includegraphics[width=\textwidth]{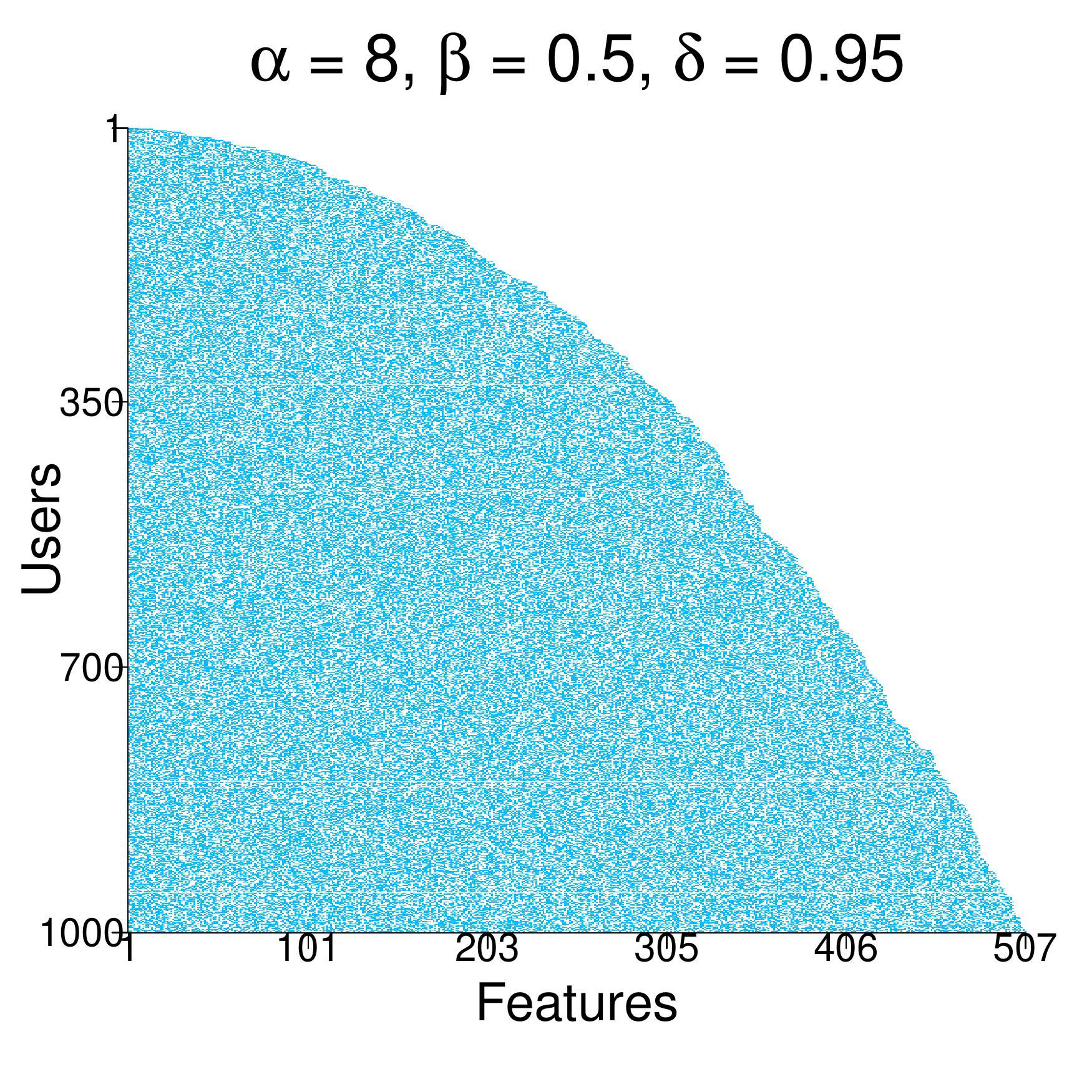}
\end{subfigure}
\caption{{\bf Examples of features matrices} for $n=1000$,
  $\beta=0.5$, different values of $\alpha: 3$ (top), $8$ (bottom) and
  different values of $\delta: 0.1$ (left), $0.5$ (middle), $0.95$
  (right).  Colored points denote $1$ and white points denote 
  $0$.}\label{matrices-delta}
\end{figure}

\begin{table}[ht]
 \centering
{\scriptsize{
 \begin{tabular}{l | c c c }

& $\delta\,=\,0.1$&  $\delta\,=\,0.5$  & $\delta\,=\,0.95$  \\ 
\hline
 $\alpha=3$ & 0.1005 &  0.1119 &  0.0099 \\

   $\alpha=8$& 0.1010&   0.1129 &  0.0097  \\
   \hline
 \end{tabular}
  }}
 \caption{\textbf{Measure of the ``uniformity'' of the feature matrix}
   defined as the difference (averaged over $100$ realizations)
   between the mean number of nodes per feature  for the first and the
   second half of the features' set. Considered parameters: $\alpha=3,\,8$,
   $\beta=0.5$ and $\delta=0.1,\,0.5,\,0.95$.
}
\label{tab_mean} 
\end{table}

\subsection{Simulations of the unipartite network and 
procedure in order to recover $K$ and $\theta$}

We performed the simulations of the unipartite network as follows.
Once a feature matrix $F$ is generated, links are created according to
the two phases of the link construction described in Section
\ref{model}, taking $\Phi$ as in \eqref{sigmoid}. We simulated the
network for $n=1\,000$ nodes on a sample of $100$ realizations.\\

\indent In the first set of experiments, we fixed a number $\ell$ of
links and we determined the value of $\theta$, for different values of
$K$, by solving (numerically) the equation
\begin{equation}\label{eq-theta}
\sum_{i=2}^{n}\sum_{j=1}^{i-1}
\left(1+
e^{K(\theta - \sum_{k=1}^{L_i} F_{i,k} F_{j,k})}
\right)^{-1}=\ell\,,
\end{equation}
in order to have the expected number of (undirected) links due to the
first phase of the unipartite network construction equal to the given
number $\ell$. Hence, we stu\-died the network structure as a function
of the parameters $K$ and $p$ (related to the link formation). In
particular, we recall that $p$ increases the triadic closure
phenomenon. We also considered different values of $\delta$, that
regulates the preferential attachment in the transmission of the
features and so influences the shape of the feature matrix $F$. In the
Appendix we report the results.\\

\indent With the second set of experiments, we studied the accuracy of
the procedure (\ref{system-K-theta}) used in order to recover $K$ and
$\theta$.  Hence, we fixed $\alpha=10$, $\beta=0.5$, $\delta=0.1$,
$K=1$, $\theta=10$, and $p=0$ (so that $A'=A$) and we generated a
sample of $R=100$ realizations of the network. We then applied the
procedure (\ref{system-K-theta}) to each realization $r$ (with
$s^*=10$\footnote{We also consider different values for $s^*$ and we
  obtain similar results.}) in order to get the corresponding values
${\widehat K}_{r}$ and ${\widehat\theta}_r$. We found:
\begin{equation*}
\begin{split}
&\frac{1}{R}\sum_{r=1}^R{\widehat K}_r=1.000462,
\qquad
MSE_K=\frac{1}{R}\sum_{r=1}^R({\widehat K}_r-K)^2=0.00415,\\
&\frac{1}{R}\sum_{r=1}^R{\widehat\theta}_r=9.998843,
\qquad
MSE_\theta=\frac{1}{R}\sum_{r=1}^R({\widehat\theta}_r-\theta)^2=0.00010.
\end{split}
\end{equation*}

\section{Application to a co-authorship network} 
\label{real-data}

We downloaded bibliographic information of papers and
preprints found in the IEEE Xplore database \cite{explore}.  In this
dataset a link is taken as the co-authorship of a paper between two or
more authors and the contexts of the papers are given by $2$-grams
(pairs of sequential words in the title or abstract). We selected the
papers using search terms related to the specific research area of
autonomous cars (also called connected cars).  

\subsection{Description of the dataset}

We downloaded (on Aug. 7, 2014) all papers in the IEEE preprint and
paper archive using $17$ specific search terms: `Lane Departure
Warning', `Lane Keeping Assist', `Blindspot Detection', `Rear
Collision Warning', `Front Distance Warning', `Autonomous Emergency
Braking', `Pedestrian Detection', `Traffic Jam Assist', `Adaptive
Cruise Control', `Automatic Lane Change', `Traffic Sign Recognition',
`Semi-Autonomous Parking', `Remote Parking', `Driver Distraction
Monitor', `V2V or V2I or V2X', `Co-Operative Driving', `Telematics \&
Vehicles', and `Night vision'.  The IEEE archive returned all the
papers in their database that contain these terms in the title or
abstract, and we downloaded the bibliographic records for all returned
papers including the authors, title, abstract, and the date on which
the paper was added to the database.  This download yielded $6\,129$
distinct papers with a complete bibliographic record and at least two
authors.  While these search terms can not be expected to yield all
papers related to automated car research, we expect to have found a
relatively broad panel of related papers.

\subsection{Analysis of the feature-structure}

The feature matrix was built by extracting all $2$-grams (pairs of
words) appearing in either the title or abstract of a paper.  The text
was converted to lowercase, removing all punctuation (with the
exception of the `/' and `.' characters) and multi-spaces, and split
into individual sentences.  The $2$-grams occurring in any sentence in
the title or abstract were labeled as features of the paper. In order
to remove spurious $2$-grams (e.g. `this paper' often occurs in the
abstract, but it is not relevant to connected cars), we exclude any
$2$-grams containing any of the words: `the', `a', `of', `and', `to',
`is', `for', `in', `an', `with', `by', `from', `on', `or', `that',
`at', `be', `which', `are', `as', `one', `may', `it', `and/or', `if',
`via', `can', `when', `we', `his', `her', `their', `this', `our',
`into', `has', `have', `only', `also', `do', `does', `presents',
`paper', `doesn't', and `not'.  This approach gave $155\,897$ distinct
{\em $2$-grams} (features) for a total of $6\,129$ {\em papers}
(nodes). We ordered the papers chronologically based on their entry
date into the IEEE database (which we expect to be a good proxy for
their publication date). The $2$-grams were ordered in terms of their
first appearance in a paper (as described in Section \ref{ass}).  \\

\indent Having extracted the set of the $2$-grams contained in each
paper, we constructed the feature-matrix $F$, with $F_{ik}=1$ if paper
$i$ contains the $2$-gram $k$ and $F_{ik}=0$ otherwise.  The resulting
matrix $F$ is shown in Fig. \ref{features_matrix_data}(a), with
non-zero values of $F$ indicated by colored points.  We also simulated
the feature-matrix for a smaller network of $1000$ nodes taking the
parameters equal to the corresponding estimated values (see
Fig. \ref{features_matrix_data}(b)). The number of features obtained
in the simulation is $28\,664$, which is consistent with the observed
matrix.

\begin{figure}
\centering
\begin{subfigure}[b]{0.59\textwidth}
\includegraphics[width=\textwidth]{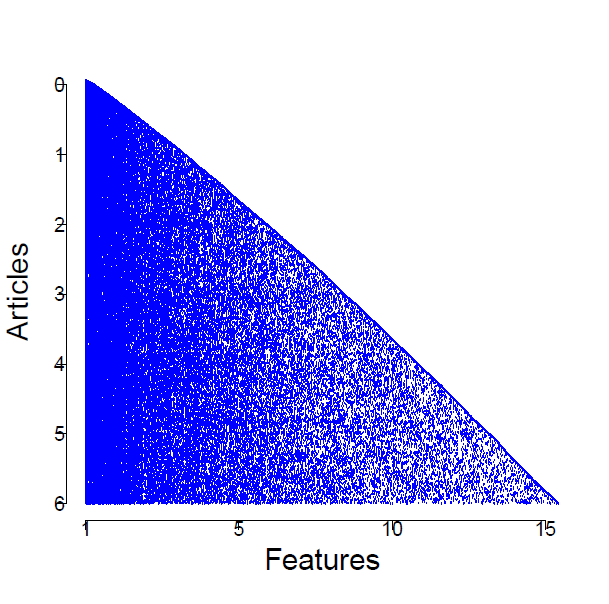}
\caption{}
\end{subfigure}
\begin{subfigure}[b]{0.40\textwidth}
\includegraphics[width=0.6\textwidth]{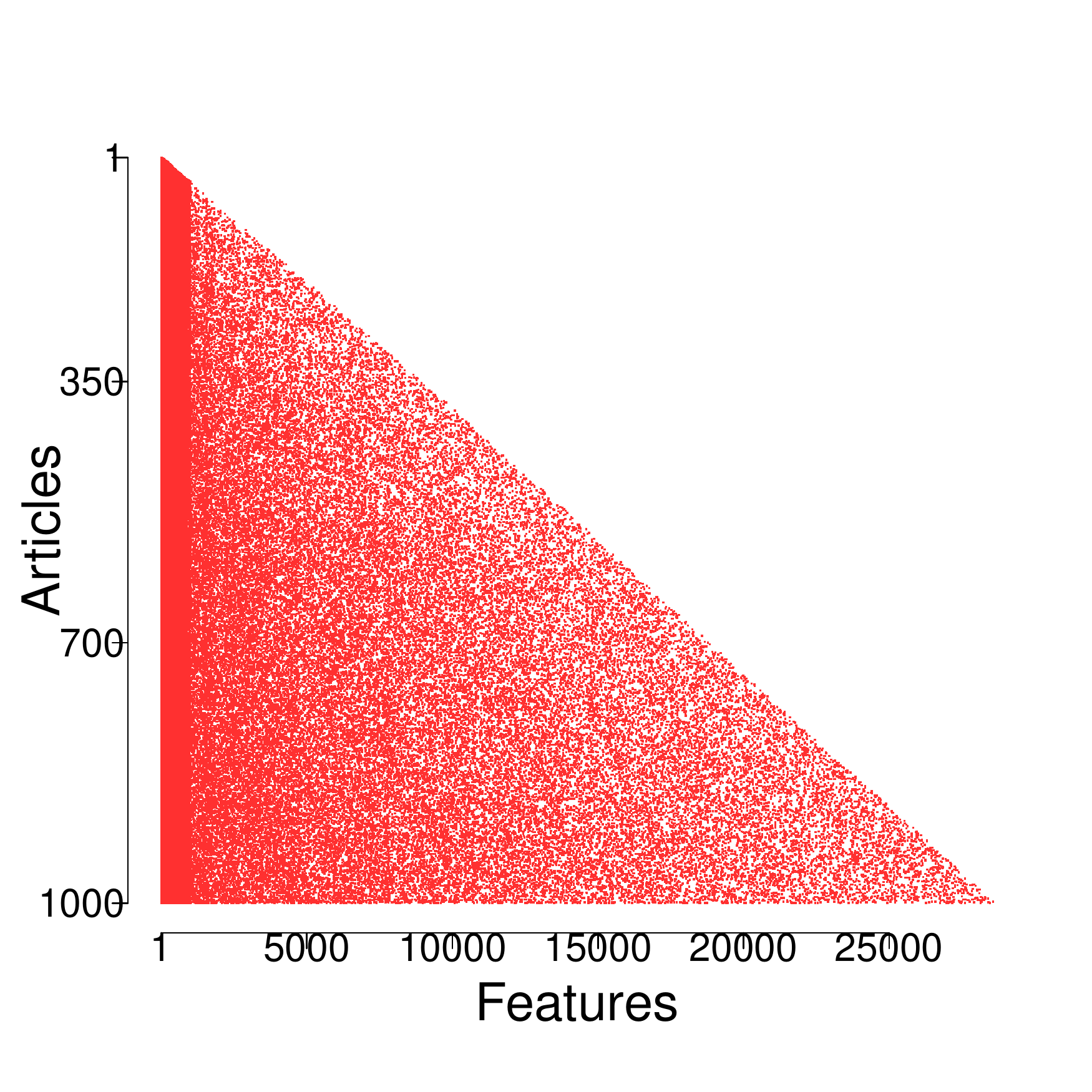}
\caption{}
\end{subfigure}
\caption{(a) Feature-matrix associated to the dataset. Dimensions:
  $6\,129$ nodes (papers) $\times$ $155\,897$ features
  ($2$-grams). Colored points denote $1$ and white points denote $0$. (b)
  Feature matrix for $1000$ nodes, obtained by simulation of the model
  with $\alpha=\widehat\alpha=32.28$, $\beta=\widehat\beta=0.98$, and
  $\delta=\widehat\delta=0.0057$. Colored points denote $1$ and white
  points denote $0$. The total number of features is $28\,664$, which
  is consistent with the observed matrix.}
\label{features_matrix_data}
\end{figure}

\begin{figure}
\centering
\begin{subfigure}[b]{0.40\textwidth}
\includegraphics[width=\textwidth]{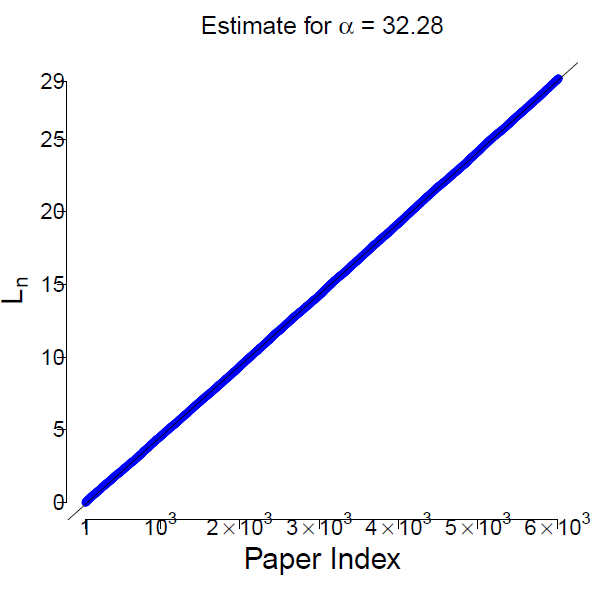}
\end{subfigure}
\begin{subfigure}[b]{0.40\textwidth}
\includegraphics[width=\textwidth]{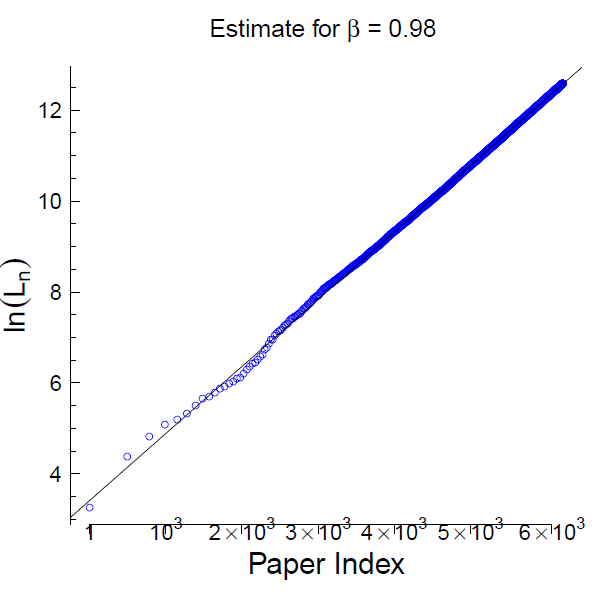}
\end{subfigure}
\caption{Estimated values of the model parameters $\alpha$ and
  $\beta$.}
\label{estimates_data}
\end{figure}

The growth of the cumulative count $L_n$ of the distinct $2$-grams
(the number of distinct $2$-grams seen until the $n^{th}$ paper
included, as described in Section \ref{ass}) is shown in
Fig. \ref{estimates_data}(b) in a log-log scale and it shows a clear
power-law behavior, with estimated parameter $\widehat{\beta}= 0.98$
(that corresponds to the estimated value of the model parameter
$\beta$). Regarding the model parameter $\alpha$, we get the estimated
value $\widehat{\alpha}=32.28$ and in Fig. \ref{estimates_data}(a) we
show the corresponding fit plotting the cumulative count $L_n$ of the
$2$-grams as a function of $n^{\widehat\beta}$. Finally, the estimated
value for the parameter $\delta$ is $\widehat\delta=0.0057$. As we can
see, this last value is very small and so we can conclude that the
preferential attachment rule in the transmission of the features plays
an important role.

\subsection{Analysis of the unipartite network}

Our dataset includes $6\,129$ papers for a total of $13\,581$
  distinct author names. The considered unipartite network is
constructed taking the papers as nodes and {\em drawing a link between two
nodes if they share at least one author}. We harmonized the author
names across different papers by ensuring that the authors' last names
are always found in the same position and removed any stray
punctuation in the names.  No further disambiguation was performed,
meaning that authors who may use their full names in some papers but
only their initials in other papers will be treated as distinct. For
example, the names ``J. J. Anaya'' and ``Jose Javier Anaya'' are
treated as distinct authors in our dataset, while it is possible that
these distinct names refer to the same person. A full disambiguation
of author names is computationally difficult \cite{disambiguation},
and beyond the scope of this paper. This approach gave a unipartite
network with $19\,065$ links that involve $4\,712$ nodes in the
network. This means that there is a set of $1\,417$ isolated nodes,
where a paper has two or more authors that are not listed on any other
paper in the dataset. However, we decided to also consider these nodes
in our analysis since we included them in the features matrix as nodes
that can potentially link to other nodes.  \\

\indent The distribution of the $2$-grams (the features) in common
between two papers (the nodes) given the presence or the absence of at
least one shared author (i.e. given the presence or the absence of a
link between them) is plotted in Figure \ref{distributions}(a). The
curve with (red) triangles is the distribution of the number of $2$-grams
shared by two papers given they have at least one co-author. More
precisely, for each value on the $x$-axis, we have on the $y$-axis the
fraction
\begin{equation}\label{fraction1}
\frac{
\hbox{num. of pairs of papers with } x\, \hbox{2-grams in common and at 
least 1 shared author}
}
{\hbox{num. of pairs of papers with at least 1 shared author}}.
\end{equation}
The curve with (green) stars represents the distribution of the number
of $2$-grams shared by two papers given they have no authors in
common, i.e. it is given by the same formula as (\ref{fraction1}) but
with pairs of papers without shared authors. As we can see, there is a
higher probability of common $2$-grams when there are shared authors. \\

\indent The fraction of pairs of papers with $x$ $2$-grams in common
that have at least one shared author is plotted in Figure
\ref{distributions}(b) by the curve with (red) triangles. More
precisely, for each value on the $x$-axis, we have on the $y$-axis the
fraction
\begin{equation}\label{fraction2}
\frac{
\hbox{num. of pairs of papers with } x\, \hbox{2-grams in common and at 
least 1 shared author}
}
{\hbox{num. of pairs of papers with } x\, \hbox{2-grams in common}}.
\end{equation}
As we can see, the plotted fraction increases with the number of
features in common. \\

\begin{figure}
\centering
\begin{subfigure}[b]{0.45\textwidth}
\includegraphics[width=\textwidth]{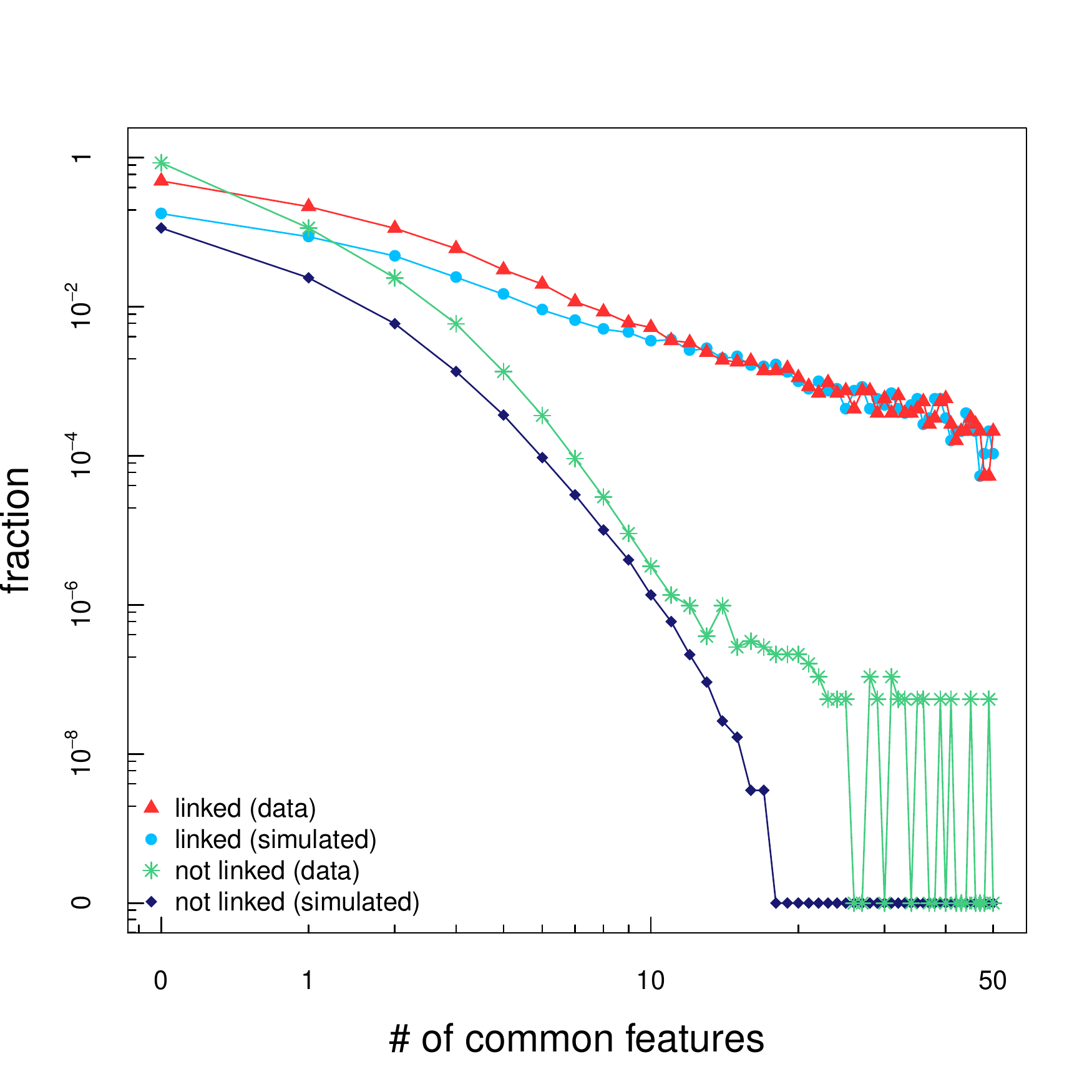}
\caption{}
\end{subfigure}
\begin{subfigure}[b]{0.45\textwidth}
\includegraphics[width=\textwidth]{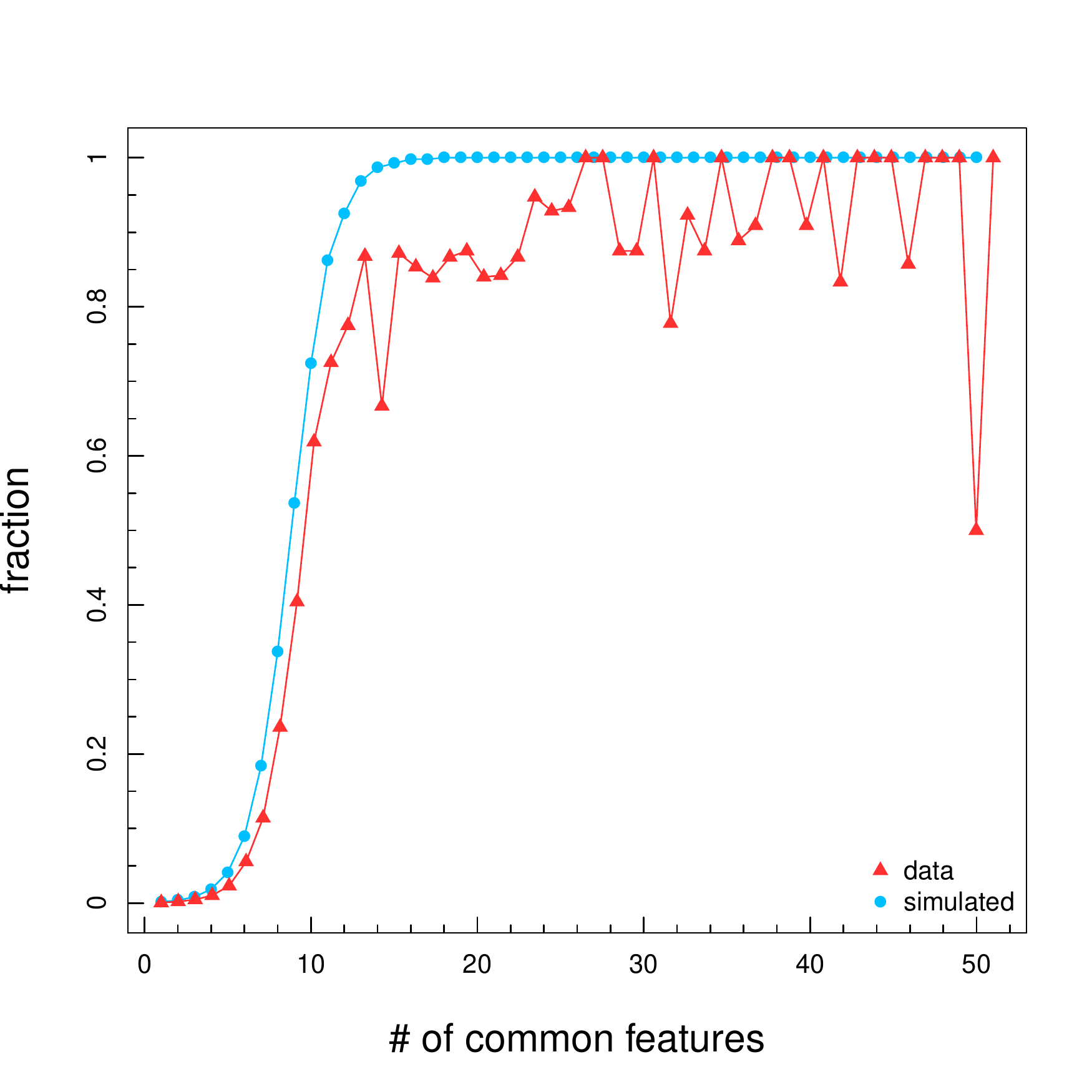}
\caption{}
\end{subfigure}

\caption{(a) Distribution of the $2$-grams (features) in common
  between two papers (nodes) given the presence (red triangles for the
  real data and light blue circles for the simulated ones) or the
  absence (green stars for the real data and dark blue squares for the
  simulated ones) of at least one co-author. (b) Fraction of pairs of
  papers with $x$ $2$-grams in common that have at least one co-author,
  for the real data case (red triangles) and for the simulated one
  (light blue circles).}
\label{distributions}
\end{figure}

\indent The network is composed of $586$ connected components with at
least one edge and $1\,417$ isolated nodes (a total of $2\,003$
components). The largest connected component has $2\,776$ nodes and
$16\,108$ links, so about the $45\%$ of the nodes can reach each other
in the largest connected component and it includes about the $84\%$ of
the links. The diameter (i.e. the maximum distance between nodes) of
the largest connected component is $23$. The other $585$ connected
components (disconnected from the largest component but still having
at least one edge) globally contain $1\,936$ nodes, and over $90\%$ of
the components (containing over $75\%$ of the nodes outside of the
largest connected component) contain $7$ or fewer nodes. Hence the
percentage of reachable pairs (denoted by $RP$ in the remainder of the
paper) of nodes in the network is about $20.51\%$.  \\

\indent We decided to first use the model with $p=0$ in order to have
a benchmark and then try to guess a good value for $p$. Taking $p=0$,
we set $A'=A$ (i.e. links are only formed by means of the first phase)
and we applied the procedure (\ref{system-K-theta}) to the observed
feature-matrix $F$ with $s^*=10$ (the corresponding value for $f^*$ is
$0.725$) and $\ell=19\,065$ in order to detect $K$ and $\theta$: we
found $\widehat{K}=0.8228$ and $\widehat\theta = 8.8201$.  We then
generate a sample of $100$ realizations of the network by simulating
the model starting from the observed matrix $F$ and with $p=0$, $K=
{\widehat K} = 0.8228$, and $\theta = {\widehat\theta} = 8.8201$.  We
obtained a network structure very different from the observed one (for
instance, $RP=99\%$). This can be obviously explained by the fact that
we set $p=0$ (benchmark case), while a value of $p$ strictly greater
than $0$ is guessable. Indeed, an author with $m\geq 3$ papers
automatically guarantees a minimum of $m \choose 3$ triangles. Setting
$p=0.7$ and generating a sample of $100$ realizations of the network
by simulating the model starting from the observed matrix
$F$ \footnote{In this case we took into account that $A'$ is different
  from $A$, and so the parameters $K$ and $\theta$ used for the
  simulations were recovered by applying the procedure
  (\ref{system-K-theta}) to the observed feature-matrix $F$ with a
  smaller $\ell$ (that corresponds to the expected number of links
  formed during the first phase). We set $\ell=4\,000$ in order to
  have an averaged total number of links around the observed one. We
  found ${\widehat K} = 1.019574$ and $ \widehat\theta = 9.047858$.},
we succeeded to capture a value for $RP$ very near to the observed
one, i.e. $RP=19.61\%$ (this value is an average over the $100$
realizations). Moreover, we obtained that the largest connected
component contains on average $2\,689.16$ nodes, again a value near to
the observed one.  Finally, Figure \ref{distributions}(a) contains the
distribution of the features in common between two nodes given the
presence (light blue circles) or the absence (dark blue squares) of a
link between them and Figure \ref{distributions}(b) depicts the
fraction of pairs of nodes with $x$ features in common that are
linked. Although the curves related to real data are obviously more
irregular, the curves generated by simulations properly fit to the
observed ones.

\section{Conclusions and discussion on some variants of the model}
\label{conclusions}

In this paper, we presented a new network model, where each node is
characterized by a number of features and the probability of a link
between two nodes depends on the number of features and neighbors they
share, so that it includes two of the most observed phenomena in
complex systems: assortativity, i.e. the prevalence of network-links
between nodes that are si\-mi\-lar to each other in some sense, and
triadic closure, meant as the high probability of having a link
between a pair of nodes due to common neighbors. The bipartite network
of nodes and features grows according to a stochastic dynamics that
depends on three parameters respectively regulating the preferential
attachment in the transmission of the features to the nodes, the
number of new features per node, and the power-law behavior of the
total number of observed features.  We provide theoretical results and
statistical tools for the estimation of the model parameters involved
in the feature-structure dynamics. From the observation of the
feature-matrix, we completely determine the parameters, $\alpha,
\beta, \delta$, that regulate its evolution. We provide a procedure
for recovering the two parameters, $K,\, \theta$, of the function
$\Phi$, which relates the link probability between two nodes to their
similarity in terms of common features, and the parameter $p$ which
tunes triadic closure. However, as discussed in Section
\ref{estimation}, for this last point, we need to know which are the
links formed by assortativity and those formed by triadic closure, but
often they are not easily distinguishable. Therefore we aim in the
future to evaluate more sophisticated estimation techniques for this
issue. Nevertheless, as shown in Section \ref{real-data}, we can still
exploit the proposed procedure in order to guess a good combination of
these parameters.\\

\indent The originality and the merit of our model mainly lie in the
double temporal dynamics (one for the feature-structure and one for
the network of nodes), but also in the attention given to both
assortativity and triadic closure mechanisms. We underline that,
differently from other models in the literature, we do not require to
specify a priori the values of some hyperparameters, such as the total
number of possible features (avoiding some selection problems
discussed in \cite{kri}). In the future, we aim at improving our model
in order to make it suitable for other kind of networks, e.g. real
social networks (such as friendship networks). In particular, the
following variations are possible:
\begin{itemize}
\item{\em Normalizing the number of common features:} We can vary the
  model by replacing the factor $F_{i,k}F_{j,k}$ in formula
  (\ref{similarity}) with
$$
\frac{F_{i,k}F_{j,k}}{\sum_{j'=1}^{i-1} F_{j',k}}, \, 
\forall (i,j) \mbox{ with } 1\leq j\leq i-1,
$$ 
\noindent so that the contribution of a common feature $k$ is smaller
when the number of nodes with $k$ as a feature is larger.

\item{\em Weighted bipartite matrices:} We can modify the model by
  replacing in the inclusion-probability and in the link-probability
  the binary random number $F_{i,k}$ by a random weight $W_{i,k}$ of
  the form $W_{i,k}=F_{i,k}Y_{i,k}/(\sum_{k=1}^{L_i} F_{i,k}Y_{i,k})$,
  where $Y_{i,k}$ are i.i.d.  strictly positive random variables. (By
  convention, we set $0/0=0$.)  Hence, we have
$$ 
W_{i,k}\in [0,1]
\quad\mbox{and}\quad 
\sum_{k=1}^{L_i} W_{i,k}=1
$$ so that $W_{i,k}$ represents the weight percentage given to feature
$k$ by node $i$. Therefore, the preferential attachment in the
inclusion-probability becomes a ``weighted preferential attachment'',
in the sense that it depends on the total weight given to feature $k$
by the previous nodes, and the link-probability depends on the weights
associated to the common features.

\item {\em Changeable links:} For some real situations, we need to
  consider also the case in which the links among nodes can change
  along time.  For instance, we can combine a link-formation model and
  a link-dissolution model as in \cite{kri2014}. See also
  \cite{hs2003} for node exit.

\item{\em Exit of some features and social influence of links on
  features:} We can modify the evolution of the feature-structure by
  accounting for the fact that at each time step $j$ (after the
  arrival of the node $j$) some features can become ``obsolete'' and
  so for such a feature $k$ we will have $F_{i, k}=0$ for all $i\geq
  j+1$. Moreover, a node could change some features under the
  influence of its ``friends'' (i.e. neighbors) \cite{hs2003}.  Hence,
  we can introduce a sequence $(F^{(i)})_i$ of bipartite matrices such
  that each $F^{(i)}$ provides the features before the arrival of node
  $i+1$, so that in the inclusion-probabilities and in the
  link-probabilities for node $i+1$, the matrix $F$ is replaced by
  $F^{(i)}$.

\item{\em Different dynamics for triadic closure:} We can change the
  second phase of our model by means of different policies for the
  selection of additional neighbors of a node $i$ among the neighbors
  of $i$'s neighbors.  Indeed, in this paper we consider a binomial
  model according to which each common neighbor of a pair $(i,j)$ of
  not-linked nodes gives, independently of the others, a probability
  $p$ of inducing a link between $i$ and $j$. A possible alternative
  is that, with probability $p$, an additional link for a certain node
  is formed by the selection (uniformly at random) of a node among the
  neighbors of its neighbors (e.g. \cite{bianconi2014pre}).  
\\[15pt]
\end{itemize}

\noindent {\bf Acknowledgments and Financial Support}\\

\noindent Authors acknowledge support from CNR PNR Project
``CRISIS Lab''. \\


\appendix

\section{Appendix}
\label{appendix}

\subsection{Proof of the asymptotic 
behavior of $L_n$}

\begin{theorem}\label{th-L}
Consider our model, the
following statements hold true:
\begin{itemize}
\item[a)] ${L_n}/{\ln(n)}\stackrel{a.s.}\longrightarrow 
\alpha$ for $\beta =0$;
\item[b)] ${L_n}/{n^{\beta}}\stackrel{a.s.}\longrightarrow
{\alpha}/\beta$ for $\beta\in (0,1]$.
\end{itemize}
\end{theorem}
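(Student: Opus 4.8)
The plan is to exploit that, by construction, the increments $N_1,N_2,\dots$ are independent and $N_i$ is $\mathrm{Poi}(\lambda_i)$-distributed with $\lambda_i=\alpha/i^{1-\beta}$ (the first node fits this formula since $\lambda_1=\alpha$). Hence $L_n=\sum_{i=1}^n N_i$ is itself Poisson distributed with mean
$$
m_n:=\sum_{i=1}^n\lambda_i=\alpha\sum_{i=1}^n i^{\,\beta-1}.
$$
First I would establish the deterministic asymptotics of $m_n$ by comparing the sum with $\int_1^n x^{\beta-1}\,dx$, using the monotonicity of $x\mapsto x^{\beta-1}$: this gives $m_n/\ln n\to\alpha$ when $\beta=0$ and $m_n/n^{\beta}\to\alpha/\beta$ when $\beta\in(0,1]$. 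The theorem then reduces to showing that the random fluctuation is negligible on the relevant scale, i.e. $(L_n-m_n)/b_n\stackrel{a.s.}{\longrightarrow}0$ with $b_n=\ln n$ in case (a) and $b_n=n^{\beta}$ in case (b).

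For the fluctuation term I would invoke the strong law of large numbers for independent, not necessarily identically distributed, summands in its general form: if $b_n\uparrow+\infty$ and $\sum_i\mathrm{Var}(N_i)/b_i^{2}<\infty$, then by Kolmogorov's convergence theorem the series $\sum_i(N_i-\lambda_i)/b_i$ converges almost surely, and Kronecker's lemma upgrades this to $b_n^{-1}\sum_{i=1}^n(N_i-\lambda_i)\to0$ a.s. Since $\mathrm{Var}(N_i)=\lambda_i=\alpha\,i^{\,\beta-1}$, the series to be checked is $\sum_{i\ge2}\alpha\,i^{\,\beta-1}/(\ln i)^{2}$ in case (a), which converges by comparison with $\int dx/\big(x(\ln x)^{2}\big)$, and $\sum_{i\ge1}\alpha\,i^{\,\beta-1}/i^{2\beta}=\alpha\sum_i i^{-1-\beta}$ in case (b), which converges because $\beta>0$. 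Writing $L_n/b_n=(L_n-m_n)/b_n+m_n/b_n$ and combining with the first step yields both claims.

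The computations are routine; the only points that need a little care are: checking that the chosen normalizations are genuinely nondecreasing to $+\infty$, so that Kronecker's lemma applies (true for $\ln n$ with $n\ge2$ and for $n^{\beta}$ with $\beta>0$); the boundary case $\beta=1$, where $\lambda_i\equiv\alpha$, $m_n=\alpha n$, and the variance series is the classical $\sum i^{-2}$; and the observation that $\beta=0$ genuinely requires the logarithmic rather than a polynomial normalization, which is exactly why the variance series there carries the extra $(\ln i)^{-2}$ factor and still converges. I do not expect a real obstacle here; an alternative, more hands-on route would be to bound $P\big(|L_n-m_n|>\varepsilon b_n\big)$ via Chebyshev (or a Poisson tail estimate) along a geometrically growing subsequence $n_k$, apply Borel--Cantelli, and then fill the gaps using the monotonicity of $n\mapsto L_n$ together with the regular variation of $m_n$ — but the Kolmogorov--Kronecker argument is shorter.
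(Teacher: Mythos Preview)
Your proposal is correct and follows essentially the same route as the paper: reduce to the deterministic asymptotics of $m_n=\sum_i\lambda_i$, then control the centered sum via the summability of $\sum_i\lambda_i/b_i^2$ and Kronecker's lemma. The only cosmetic difference is that the paper phrases the convergence of $\sum_i(N_i-\lambda_i)/b_i$ as $L^2$-boundedness of the associated martingale rather than citing Kolmogorov's variance criterion, and it treats $\beta=1$ separately via the classical SLLN instead of folding it into the same computation as you do.
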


\begin{proof}  
Set
$\lambda_1=\alpha$ and recall that the random variables 
$N_i$ are independent and each $N_{i}$ has 
distribution Poi$(\lambda_{i})$.

\indent The assertion b) is trivial for $\beta=1$ since, 
in
this case, $L_n$ is the sum of $n$ independent random 
variables with
distribution $\hbox{Poi}(\alpha)$ and so, 
by the classical strong
law of large numbers, 
$L_n/n\stackrel{a.s.}\longrightarrow\alpha$. 

\indent Now, let us prove assertions a) and b) 
for $\beta\in
        [0,1)$. Define
\begin{gather*}
\lambda(\beta)=\alpha\,\text{ if }\,\beta=0\quad\text{and}
\quad\lambda(\beta)=\frac{\alpha}{\beta}\,\,\text{ if }\,
\beta\in (0,1),
\\
a_n(\beta)=\log{n}\,\text{ if }\,\beta=0\quad\text{and}
\quad 
a_n(\beta)=n^\beta\,\text{ if }\,\beta\in (0,1).
\end{gather*}
\noindent We need to prove that 
$L_n/a_n(\beta)\stackrel{a.s.}\longrightarrow
\lambda(\beta)$. 
First, we observe that 
\begin{gather*}
\frac{\sum_{i=1}^{n}\lambda_i}{a_n(\beta)}
=
\alpha\frac{\sum_{i=1}^{n} i^{\beta-1}}
{a_n(\beta)}
\longrightarrow \lambda(\beta),
\end{gather*}
\noindent Next, let us define
\begin{gather*}
T_0=0\quad\text{and}\quad 
T_n=\sum_{i=1}^n\frac{N_i-E[N_i]}{a_i(\beta)}=
\sum_{i=1}^n\frac{N_i-\lambda_{i}}{a_i(\beta)}.
\end{gather*}
\noindent Then $(T_n)$ is a martingale with 
\begin{gather*}
E[T_n^2]=
\sum_{i=1}^n
\frac{E\bigl[(N_i-\lambda_{i})^2\bigr]}{a_i(\beta)^2}=
\sum_{i=1}^n\frac{\lambda_{i}}{a_i(\beta)^2}
\end{gather*}
and so 
$\sup_n E[T_n^2]=\sum_{i=1}^{+\infty}\frac{\lambda_{i}}{a_i(\beta)^2}<+\infty$. 
Thus, $(T_n)$ converges a.s. and the Kronecker's lemma 
implies
$$
\frac{1}{a_n(\beta)}\sum_{i=1}^n a_i(\beta)
\frac{(N_i-\lambda_{i})}{a_i(\beta)}
\stackrel{a.s.}\longrightarrow 0,
$$
that is
$$
\frac{\sum_{i=1}^{n}N_i}{a_n(\beta)}-\frac{\sum_{i=1}^n\lambda_i}{a_n(\beta)}
\stackrel{a.s.}\longrightarrow 0.
$$
\noindent Therefore, we can conclude that 
\begin{gather*}
\lim_n\frac{L_n}{a_n(\beta)}=
\lim_n\frac{\sum_{i=1}^n N_i}{a_n(\beta)}=
\lim_n\frac{\sum_{i=1}^n\lambda_{i}}{a_n(\beta)}=
\lambda(\beta)\quad\text{a.s.}
\end{gather*}

\end{proof}

\begin{remark} \rm The above Theorem implies 
that $\ln(L_n)/\ln(n)$ is a strongly consistent 
estimator of $\beta$. Indeed, if $\beta=0$ 
then $L_n\stackrel{a.s.}\sim \alpha\ln(n)$ as $n\to
    +\infty$; hence 
$\ln(L_n)\stackrel{a.s.}\sim\ln(\alpha) +
    \ln(\ln(n))$, therefore
    $\ln(L_n)/\ln(n)\stackrel{a.s.}\sim\ln(\alpha)/\ln(n) +
    \ln(\ln(n))/\ln(n)\stackrel{a.s.}\to 0=\beta$. 
Furthermore, if $\beta>0$, then we have 
$L_n\stackrel{a.s.}\sim (\alpha/\beta)
    n^{\beta}$ as $n\to +\infty$ so
    $\ln(L_n)\stackrel{a.s.}\sim\ln(\alpha/\beta) + 
\beta \ln(n)$,
   hence
    $\ln(L_n)/\ln(n)\stackrel{a.s.}
\sim\ln(\alpha/\beta)/\ln(n) +
    \beta\stackrel{a.s.}\to \beta$.
\end{remark}

\subsection{Simulations of the unipartite network: some analysis 
on its structure}

We generated feature matrices with $n=1\,000$ nodes taking fixed
values for $\alpha$ and $\beta$, i.e. $\alpha=10$ and $\beta=0.5$, and
different values for $\delta$ ($\delta\in [0.1,\,0.5]$).  Starting
from these feature matrices, we considered the structure of the
unipartite network for three different values of $K$ ($K=1,\,4,\,10$)
and three different values of $p$ ($p=0,\,0.1,\,0.5$). \\

\indent We considered the following quantities:
\begin{itemize}
\item the clustering coefficient defined as:
\begin{align}\label{CC}
C=\frac
{\hbox{Number of closed triplets}}
{\hbox{Number of connected triplets of nodes}},
\end{align} 
 where a connected triplet is a set of three nodes that are connected
 by two or three undirected links (open and closed triplet,
 respectively). See Table \ref{tab_tri}.

\item the fraction of pairs of nodes at distance at most $20$,
  i.e. the fraction of pairs of nodes that are reachable from each other
  within at most $20$ steps (see Table
  \ref{tab_distances}):
\begin{equation}\label{RP_20}
RP_{20}=\frac
{\hbox{Number of couples of nodes at distance at most $20$}}
{\hbox{Number of couples of nodes}}.
\end{equation}
We recorded also the observed maximum value $h^*$ of the distance
between the nodes.

\item the degree distribution, in the sense of the Complementary
  Cumulative Distribution Function (CCDF) of the number of neighbors
  of each node (see Figure \ref{degree}).
\end{itemize}

\indent The clustering coefficient $C$ strongly increases with $p$ (as
expected). For $p=0$ the percentage of closed triplets increases with
$\delta$, but remains smaller or equal than $13\%$ of total triplets
for all considered values of $\delta$ and $K$. For values of $p$
greater than zero, the percentage of closed triplets increases with
$\delta$ in a range of $13\%-30\%$ for $p=0.1$ and in a range of
$39\%-62\%$ for $p=0.5$.  The effect of $K$ and $\delta$ seems to be
marginal on the clustering coefficient.  \\

\begin{table}[ht]
 \centering
{\scriptsize{
 \begin{tabular}{l | l | c c c c c c c c c c }

& &$\delta\, =$ & 0.1 & 0.15 & 0.2 & 0.25 & 0.3 & 0.35 & 0.4 & 0.45 & 0.5 \\ 
  \hline
& $p\,=\,0$&   & 0.04& 0.05 &  0.07 &  0.08 & 0.08 & 0.10 
&  0.13&  0.13& 0.10 \\
 $K=1$ 
& $p\,=\,0.1$&  & 0.13& 0.17& 0.20 & 0.23&  0.23&  0.24
& 0.26& 0.27& 0.30 \\
&  $p\,=\,0.5$&& 0.39& 0.45& 0.45& 0.49 & 0.49& 0.47&   0.49
&  0.53& 0.62 \\ 
   \hline
&  $p\,=\,0$&   & 0.06& 0.06 & 0.08 & 0.09 & 0.08 & 0.11 
& 0.13& 0.13& 0.11 \\
$K=4$
& $p\,=\,0.1$&  &  0.15& 0.18& 0.21 &  0.24& 0.23& 0.25& 0.26
& 0.28& 0.30 \\
&   $p\,=\,0.5$& & 0.42& 0.47& 0.46& 0.49 & 0.49& 0.48
&  0.50&  0.53& 0.62 \\
  \hline 
&  $p\,=\,0$&   & 0.06&  0.06 &  0.08 & 0.09 & 0.08 
& 0.11 &  0.13& 0.14& 0.11 \\
$K=10$ 
&  $p\,=\,0.1$&  & 0.15& 0.18& 0.21 & 0.24& 0.23
&  0.25& 0.26& 0.28& 0.30 \\
&   $p\,=\,0.5$& & 0.42& 0.47&  0.46& 0.49 &  0.49
& 0.48&  0.49 &  0.53& 0.62 \\
   \hline
 \end{tabular}
  }}
 \caption{Clustering coefficient (averaged over $100$
   realizations) for $\alpha=10$, $\beta=0.5$, $\ell=4000$, and
   different values of $\delta$, $K$, and $p$.}
\label{tab_tri} 
\end{table}

\indent Looking at the values obtained for the fraction of pairs of
nodes at distance at most $20$, for the two different values $\delta=0.1$ 
and $\delta=0.5$, we can notice a clear difference in the behavior 
(independently of $K$ and $p$): indeed, the fraction of reachable 
pairs for $\delta=0.1$ (when $K$ and $p$ are fixed) is highly greater
than the corresponding fraction for $\delta=0.5$.  Moreover, the fraction
of reachable pairs decreases when $K$ increases (and the other parameters
are fixed) and slightly changes when only $p$ varies. 
The complementary fraction corresponds to the pairs of nodes at distance 
greater than $20$ or not reachable from each other.
\\ \indent The observed maximum distance $h^*$ (among pairs
of nodes at distance at most $20$) varies in range of $2-5$ and
decreases when $\delta$ ($p$ and $K$, respectively) increases and the
other parameters are fixed.\\

\begin{table}[ht]
 \centering
{\scriptsize{
 \begin{tabular}{l | c c c| c c| c c }
 \hline
  & &$K=1$ & & $K=4$ & & $K=10$&\\
\hline
   &$\delta=$ & 0.1 & 0.5 & 0.1 & 0.5 & 0.1 & 0.5  \\ 
 \hline
$p=0$&   &  0.439 ($5$) & 0.128 ($4$) & 0.350 ($4$) 
& 0.118 ($4$) & 0.349 ($4$)& 0.117 ($4$) \\

$p=0.1$&  & 0.438 ($4$)&  0.128 ($3$) 
& 0.352 ($3$)& 0.118 ($3$)& 0.350 ($3$) & 0.117 ($3$)\\
  
$p=0.5$&  & 0.437 ($3$)& 0.128 ($2$)& 0.351 ($2$) & 
0.118 ($2$)&  0.349 ($2$) & 0.117 ($2$)\\
 \end{tabular}
  }}
\caption{Fraction of pairs of nodes at distance at most $20$ 
  (averaged over $100$ realizations) for $\alpha=10$, $\beta=0.5$,
  $\ell=4000$, and different values of $\delta$, $K$, and $p$. For each
  set of parameters, the corresponding observed maximum distance $h^*$
  is reported in brackets.}
\label{tab_distances} 
\end{table}

\indent Finally, the effect of $p$ on the total number of links is
clear: when $p=0$ the number of links is approximately equal to the
chosen $\ell$ (i.e. $\ell=4000$), since in this case we have only the
first phase of the unipartite network construction: links are related
only to the features. The larger $p$ the more triangles are closed and
so the more links we have. Table \ref{tab_links} reports the total
number of links for all combinations of the parameters.  Regarding the
degree distribution, Figure \ref{degree} shows the CCDF of the number
of neighbors of a node.  Parameter $p$ also
influences the shape of the degree distribution, together with
$\delta$ and $K$.\\

\begin{table}[ht]
 \centering             
{\scriptsize{
 \begin{tabular}{l | c c c| c c| c c  }
 \hline
  & &$K=1$ & & $K=4$ & & $K=10$&\\
\hline    
  &$\delta\, =$ & 0.1 & 0.5 & 0.1 & 0.5 & 0.1 & 0.5  \\ 
  \hline
 $p\,=\,0$&   &   4\,003.47 & 3\,998.15  &  4\,002.17 
&  3\,999.59 & 3\,997.13 &  3\,999.52 \\

 $p\,=\,0.1$& & 17\,853.46& 19\,862.54 & 19\,107.53
& 19\,523.42& 19\,112.46 & 19\,484.86 \\

 $p\,=\,0.5$&&  93\,093.05& 43\,538.68& 81\,343.97& 
41\,382.62&  81\,039.49  &  41\,156.34\\
 \end{tabular}
  }}
\caption{Total number of links in the unipartite network 
  (averaged over $100$ realizations) for $\alpha=10$, $\beta=0.5$,
  $\ell=4000$, and $\delta$, $K$, and $p$ varying. Note that for $p=0$
  the number is around the chosen $\ell=4000$.}
\label{tab_links} 
\end{table}

\begin{figure}
\centering
\begin{subfigure}[b]{0.30\textwidth}
\includegraphics[width=\textwidth]{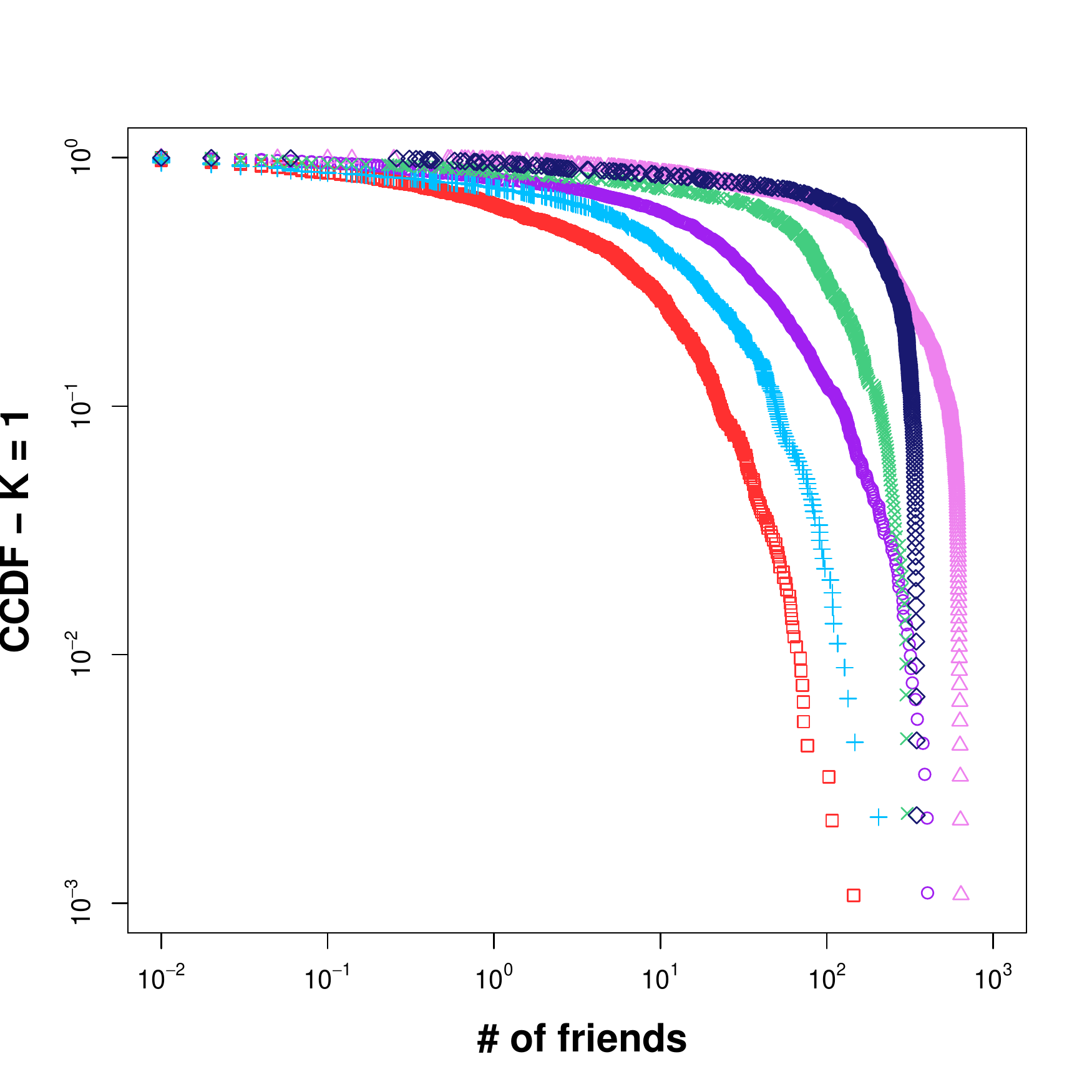}
\end{subfigure}
\begin{subfigure}[b]{0.30\textwidth}
\includegraphics[width=\textwidth]{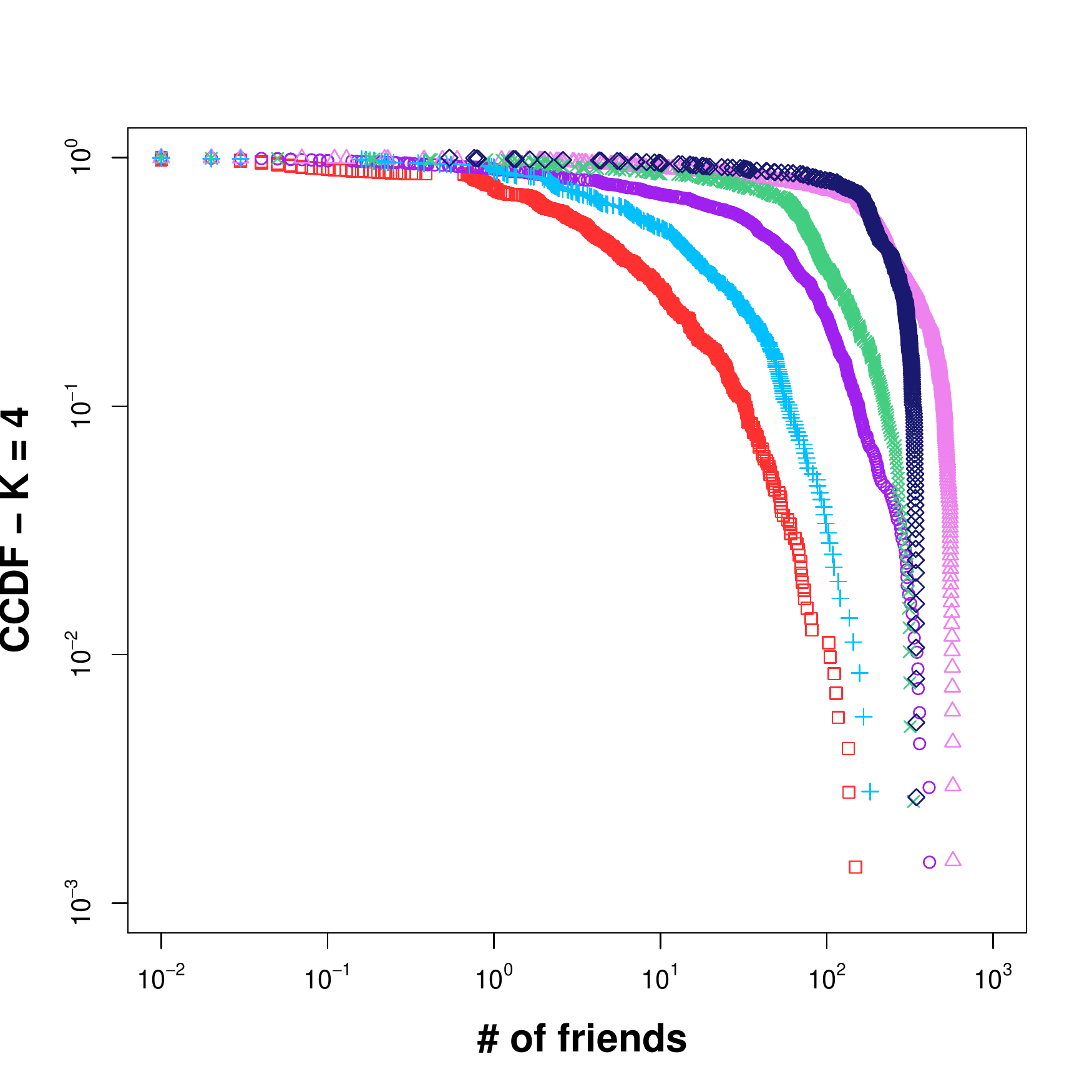}
\end{subfigure}
\begin{subfigure}[b]{0.30\textwidth}
\includegraphics[width=\textwidth]{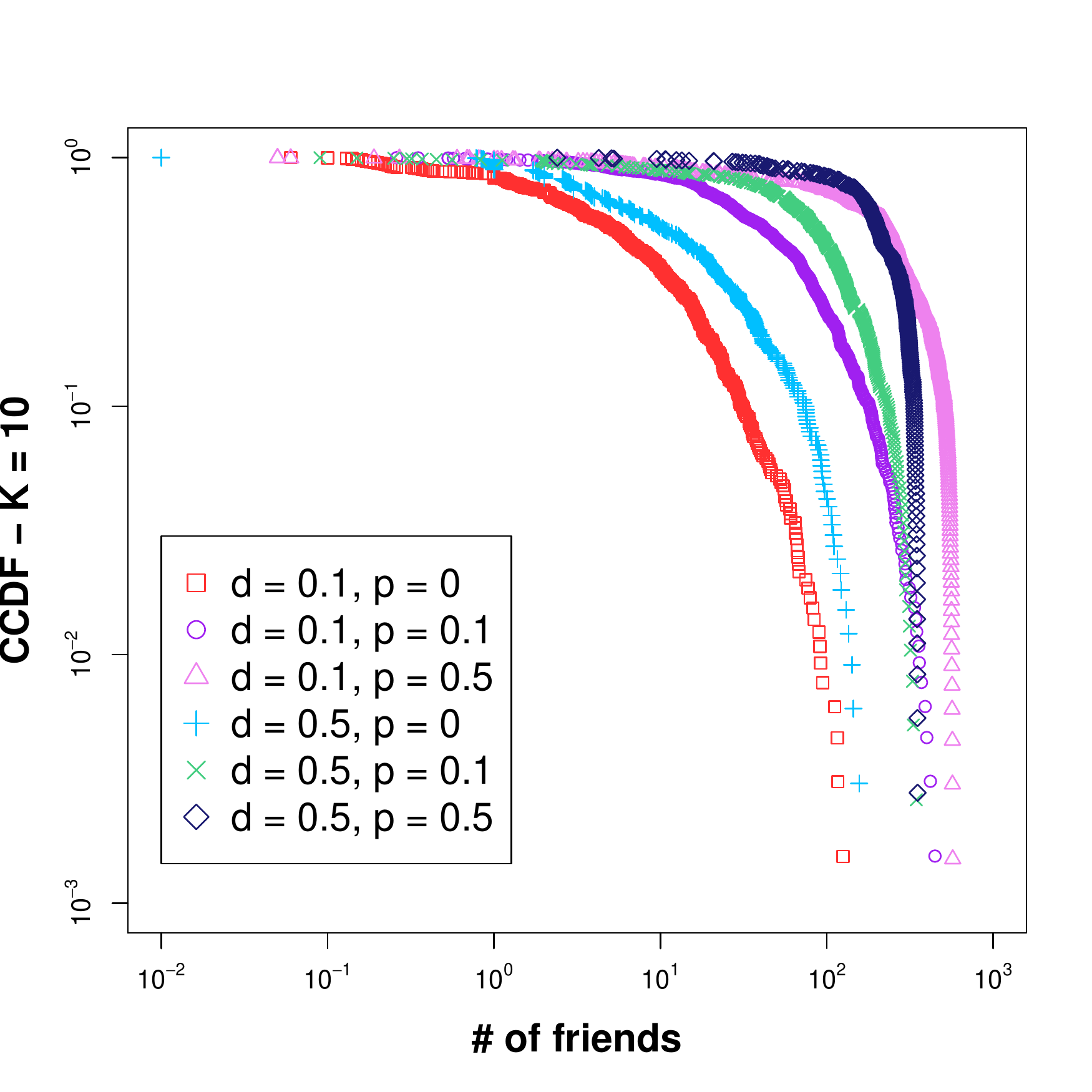}
\end{subfigure}
\caption{CCDF of the number of neighbors (averaged over $100$
  realizations) for $\alpha=10$, $\beta=0.5$, $\ell=4000$, and
  different values of $K$ (corresponding to different boxes) and
  different values of $\delta$ and $p$ (corresponding to different
  symbols and colors).  }
\label{degree}
\end{figure}


\begin{thebibliography}{100}

\bibitem{airoldi} Airoldi E., Blei D., Fienberg S. and Xing E. (2008)
  Mixed membership sto\-cha\-stic block-models. Journal of Machine
  Learning Research, 9, 1981-2014.
\bibitem{b2007} Bagler G. and Sinha S. (2007) Assortative mixing in
  protein contact networks and protein folding
  kinetics. Bioinformatics 23 (14), 1760-1767.
\bibitem{barabasi2002} Barab\'asi A. L. and Albert R. (2002)
  Statistical mechanics of complex networks. Reviews of modern physics
  74, 47-97.
\bibitem{BA} Barab\'asi A. L. and Albert R. (1999) 
Emergence of scaling in random networks. Science 286, 
509-512.
\bibitem{barabasi2002evolution} Barab\'asi A. L., Jeong H., Neda Z.,
  Ravasz E., Schubert A. and Vicsek T.  (2002) Evolution of the social
  network of scientific collaborations.  Physica A, 311, 590-614.
\bibitem{barrat2008dynamical} Barrat A., Barthlemy M. and Vespignani
  A. (2008) Dynamical processes on complex networks. Cambridge
  University Press.
\bibitem{bcpr-ibm} Berti P., Crimaldi I., Pratelli L.  and Rigo 
  P. (2015) Central limit theorems for an Indian buffet model with
  random weights. The Annals of Applied Probability 25(2), 523-547.
\bibitem{bessi2014social} Bessi A., Caldarelli G., Del Vicario M.,
  Scala A. and Quattrociocchi W. (2014) Social determinants of content
  selection in the age of (mis)information. Proceedings of
  SOCINFO 2014; abs/1409.2651.
\bibitem{blau1984} Blau P. M. and Schwartz J. E. (1984)
Crosscutting social circles: testing a macrostructural theory of intergroup
relations, Academic Press, Orlando (FL).
\bibitem{bianconi2014pre} Bianconi G., Darst R. K., Iacovacci J. and
  Fortunato S. (2014) Triadic closure as a basic generating mechanism
  of communities in complex networks.  Physical Review E, 90(4),
  042806.
\bibitem{block2014ns} Block P. and Grund T. (2014) 
Multidimensional homophily in friendship networks. 
Network Science, 2(02), 189-212.
\bibitem{bol-cri-mon} Boldi P., Crimaldi I. and 
Monti C. A network model cha\-rac\-te\-ri\-zed by a 
latent attribute structure with competition. Under review. 
Currently available on arXiv 1407.7729.
\bibitem{bramoulle2012jet} Bramoull\'e Y., Currarini S., Jackson M. O.,
  Pin P. and Rogers B. W. (2012) Homophily and long-run integration in
  social networks.  Journal of Economic Theory, 147(5), 1754-1786.
\bibitem{brown2007word} Brown J., Broderick A. J. and Lee N. (2007)
  Word of mouth communication within online communities:
  Conceptualizing the online social network. Journal of
  interactive marketing, 21(3), 2-20.
\bibitem{caldarelli2007scale} Caldarelli G. (2007) Scale-Free
  Networks: complex webs in nature and technology. OUP Catalogue.
%
%
%
\bibitem{currarini2009} Currarini S., Jackson M. O. and Pin P. (2009)
  An economic model of friendship: Homophily, minorities, and
  segregation. Econometrica, 77(4), 1003-1045.
\bibitem{currarini2013wp} Currarini S. and Vega-Redondo F. (2013) A
  simple model of homophily in social networks. University Ca' Foscari
  of Venice, Dept. of Economics Research Paper Series, (24).
\bibitem{ek} Easly D. and Kleinberg J. (2010) Networks, 
Crowds and Markets: Reasoning about a highly connected world. 
Cambridge Univ. Press. 
\bibitem{feld1982} Feld S. L. (1982) Social structural determinants of
  similarity among associates. American Sociological Review, 47(6),
  797-801.
%
%
%
%
\bibitem{gold} Goldenberg A. and Zhen E. (2009) A survey of
  statistical network models. Foundations and trends in Machine
  Learning, 2, 129-233.
\bibitem{golub} Golub B. and Jackson M. O. (2009) How homophily
  affects the speed of learning and best response
  dynamics. Quart. J. Econ., forthcoming.
\bibitem{goo2009} Goodreau S. M., Kitts J. A. and Morris
  M. (2009) Birds of a feather, or friend of a friend? Using
  Exponential Random Graph Models to investigate Adolescent Social
  Networks. Demography, 46(1), 103-125.
\bibitem{handcock} Handcock M. S., Raftery A. E. and Tantrum
  J. M. (2007) Model-based clustering for social networks. Journal of
  the Royal Statistical Society, series A, 170, 301-354.
\bibitem{h2010} Hanneke S., Fu W. and Xing E. P. (2010) Discrete
  temporal models of social networks. Electronic Journal of
  Statistics, 4, 585-605
\bibitem{hoff} Hoff P. D., Raftery A. E. and Handcock M. S. (2002)
  Latent space approaches to social network analysis. J. American
  Statistical Ass., 97, 1090-1098.
  %
%
\bibitem{hs2003} Huisman M. and Snijders T. A. B. (2003) Statistical
  Analysis of longitudinal network data with changing composition. 
Sociological methods and research, 32(2), 253-287.
\bibitem{hunter} Hunter D. R., Krivitsky P. N. and Schweinberger
  M. (2012) Computational statistical methods for social network
  models. J. comput. Graph Stat., 21(4), 856-882.
\bibitem{ispolatov} Ispolatov I., Krapivsky P. L. and Yuryev
  A. (2005).  Duplication-divergence model of protein interaction
  network. Phys. Rev. E., 71(6).
\bibitem{jackson2014} Jackson M. O. (2014) Networks in the
  understanding of economic behaviors, The Journal of Economic
  Perspectives (2014), 3-22.
\bibitem{general-sna-book} Jackson M. O. (2008). 
Social and Economic Networks. Princeton University Press.
\bibitem{jacksonrogers} Jackson M. O. and Rogers B. W. (2007) Meeting
  strangers and friends of friends: How random are social networks?
  The American economic review, 97(3), 890-915.
\bibitem{Kandel1978} Kandel D. A. (1978) Homophily, selection, and
  socialization in adolescent friendships, American Journal of
  Sociology, 84(2), 427-436.
\bibitem{kola} Kolaczyk E. D. (2009) Statistical analysis of network
  data: methods and models. Springer.
\bibitem{kossinets2009ajs} Kossinets G. and Watts, D. J. (2009) 
Origins of homophily in an evolving social network. 
American Journal of Sociology, 115(2), 405-450.
\bibitem{koss-watts} Kossinets G. and Watts D. J. (2006) Empirical
  analysis of an evolving social network. Science 311. 
\bibitem{kri2014} Krivitsky P. N., Handcock M. S. (2014) A separable
  model for dynamic networks. Journal of the Royal Statistical
  Society, series B, 76(1), 29-46.
\bibitem{kri} Krivitsky P. N., Handcock M. S., Raftery A. E. and Hoff
  P. (2009) Representing degree distributions, clustering and
  homophily in social networks with latent cluster random effects
  models. Social Networks, 31(3), 204-213.
\bibitem{lafond-neville} La Fond T. and Neville J. (2010)
  Randomization tests for distinguishing social influence and
  homophily effects. International World Wide Web Conference.
\bibitem{disambiguation} Lai R., Doolin D. M., Li G. C., Sun Y.,
  Torvik V. and Yu A. (2014)  Disambiguation and co-authorship
  networks of the U.S.  Patent Inventor Database. Research Policy 43,
  941-955.
\bibitem{Lazarsfeld1954} Lazarsfeld P. F. and Merton R. K. (1954) 
Friendship as a social process: A substantive and methodological analysis. 
Freedom and control in modern society, 18(1), 18-66.
\bibitem{louch2000} Louch H. (2000) Personal network integration:
  transitivity and homophily in strong-tie relations.  Social
  networks, 22(1), 45-64.
\bibitem{Marsden1987} Marsden P. V. (1987) Core discussion networks of
  Americans, American Sociological Review, 52(1), 122-131.
\bibitem{marsili} Marsili M., Vega-Redondo F. and Slanina F. (2004)
  The rise and fall of a networked society: A formal
  model. Proceedings of the National Academy of Sciences of the United
  States of America, 101(6), 1439-1442.
\bibitem{McPherson2001} McPherson M., Smith-Lovin L. and Cook,
  J. M. (2001) Birds of a feather: Homophily in social
  networks. Annual review of sociology, 27, 415-444.
\bibitem{MGJ2009} Miller K. T., Griffiths, T. L. and Jordan,
  M. I. (2009) Nonparametric latent feature models for link
  prediction.  In NIPS, Curran Associates, Inc., 1276-1284.
\bibitem{newman2003} Newman M. E. J. (2003) 
The structure and function of complex networks. SIAM review, 45(2), 167-256.
\bibitem{newm2003} Newman M. E. J. (2003). Mixing patterns in
  networks. Physical Review E 67 (2): 026126.
\bibitem{newman2004coauthorship} Newman M. E. J. (2004) Coauthorship
  networks and patterns of scientific collaboration. Proceedings of
  the National Academy of Sciences of the United States of America,
  101 Suppl: 5200-5.
\bibitem{nov} Nowicki K. and Snijders T. A. B. (2001) Estimation and
  prediction for sto\-cha\-stic blockstructures. J. American Statistical
  Ass., 96, 1077-1087.
\bibitem{palla2007quantifying} Palla G., Barab\'asi A. L. and Vicsek
  T. (2007) Quantifying social group evolution. Nature, 446, 664-667.
\bibitem{palla-et-al} Palla K., Knowles D. A. and Ghahramani Z. (2012)
  An infinite latent attribute model for network data. Proc. of the
  29th International Conference on Machine Learning, Edinburgh,
  Scotland, UK.
\bibitem{pa2001} Pastor-Satorras R., A. V\'azquez A. and Vespignani
  A. (2001) Dynamical and correlation properties of the
  Internet. Physical Review Letters 87 (25): 258701.
\bibitem{pi2008} Piraveenan M., Prokopenko M. and Zomaya A. Y. (2008)
  Assortative mixing in directed biological networks. IEEE/ACM
  Transactions on Computational Biology and Bioinformatics, 9.1, 66-78.
\bibitem{Quattrociocchi2014} Quattrociocchi W., Caldarelli G., Scala
  A. (2014) Opinion dynamics on interacting networks: media
  competition and social influence. Scientific
  Reports, 4.
\bibitem{Rapoport53} Rapoport A. (1953). Spread of information through
  a population with socio-structural bias: I. Assumption of
  transitivity. The Bulletin of Mathematical Biophysics, 15(4),
  523-533.
\bibitem{SCJ} Sarkar P., Chakrabarti D. and Jordan M. I. (2012)
  Nonparametric link prediction in dynamic networks. Proc. of the 29th
  International Conference on Machine Learning, Edinburgh, Scotland,
  UK.
\bibitem{sole} Sol\'e R. V., Pastor-Satorras R., Smith E. and Kepler
  T. B. (2002) A model of large-scale proteome evolution. Advances in
  Complex Systems, 5(1), 43-54.
\bibitem{sn} Snijders T. A. B. and Nowicki K. (1997) Estimation and
  prediction for stochastic blockmodels for graphs with latent block
  structure. Journal of Classification, 14(1), 75-100.
%
\bibitem{toivonen} Toivonen R., Onnela J. P., Saram\"{a}ki J.,
  Hyv\"{o}nen J. and Kaski K. (2006) A model for social
  networks. Physica A: Statistical Mechanics and its Applications,
  371(2), 851-860.
\bibitem{Verbrugge1977} Verbrugge L.M. (1977) Structure of adult
  friendship choices. Social Forces, 56(2), 576-597.
\bibitem{wasserman} Wasserman S. and Faust K. (1994)  
Social network analysis: Methods and Applications. 
Cambridge University Press.
\bibitem{explore}
http://ieeexplore.ieee.org/search/advsearch.jsp
\end{thebibliography}
\end{document}